\newtheorem{theorem}{Theorem}
\newtheorem{lemma}[theorem]{Lemma}
\newtheorem{proposition}[theorem]{Proposition}
\newtheorem{definition}{Definition}
\DeclareMathOperator{\tr}{tr}
\DeclareMathOperator{\rank}{rank}
\DeclareMathOperator{\poly}{poly}
\DeclareMathOperator{\argmin}{arg\,min}
\begin{document}

\title{Post-variational quantum neural networks}
\author{Po-Wei Huang}
\email{huangpowei22@u.nus.edu}
\affiliation{Centre for Quantum Technologies, National University of Singapore, 3 Science Drive 2, Singapore 117543}
\affiliation{Department of Computer Science, National University of Singapore, 13 Computing Drive, Singapore 117417}

\author{Patrick Rebentrost}
\email{patrick@comp.nus.edu.sg}
\affiliation{Centre for Quantum Technologies, National University of Singapore, 3 Science Drive 2, Singapore 117543}
\affiliation{Department of Computer Science, National University of Singapore, 13 Computing Drive, Singapore 117417}

\date{\today}

\begin{abstract}
Hybrid quantum-classical computing in the noisy intermediate-scale quantum (NISQ) era with variational algorithms can exhibit barren plateau issues, causing difficult convergence of gradient-based optimization techniques. In this paper, we discuss ``\emph{post-variational strategies}'', which shift tunable parameters from the quantum computer to the classical computer, opting for ensemble strategies when optimizing quantum models. We discuss various strategies and design principles for constructing individual quantum circuits, where the resulting ensembles can be optimized with convex programming. Further, we discuss architectural designs of post-variational quantum neural networks and analyze the propagation of estimation errors throughout such neural networks. Finally, we show that empirically, post-variational quantum neural networks using our architectural designs can potentially provide better results than variational algorithms and performance comparable to that of two-layer neural networks.
\end{abstract}

\maketitle

\section{Introduction}

Variational quantum methods~\citep{cerezo2021variational} are proposed to solve optimization problems in chemistry~\citep{peruzzo2014variational, mcclean2016theory}, combinatorial optimization~\citep{farhi2014quantum, hadfield2019quantum} and machine learning~\citep{mitarai2018quantum, schuld2020circuit, farhi2018classification, benedetti2019parameterized, schuld2019quantum} with a potential quantum advantage in the NISQ regime~\citep{preskill2018quantum, bharti2022noisy2}. These methods often use hardware-efficient Ans\"atze that are problem-agnostic~\citep{kandala2017hardware}. However, many Ans\"atze face the well-studied barren plateau problem \citep{mcclean2018barren, cerezo2021cost, ortiz_marrero2021entanglement, holmes2022connecting, wang2021noise, ragone2023unified, fontana2023adjoint}. Several methods are suggested to alleviate this problem, including parameter initialization that form block identities~\citep{grant2019initialization}, layerwise training~\citep{skolik2021layerwise}, correlating parameters in the Ansatz~\citep{volkoff2021large}, pretraining with classical neural networks~\citep{verdon2019learning}, and the usage of specific architectures such as quantum convolutional neural networks~\citep{cong2019quantum, pesah2021absence}. 

Partially inspired by the barren plateau problem, \citet{huang2021near} considered the solving of linear systems with near-term quantum computers with the use of \emph{classical combinations of quantum states} (CQS). Here, based on the input matrix, a so-called Ansatz tree is constructed which allows for a solution with provable guarantees and also for the systematic application of heuristic methods. The concept of utilizing combinations of quantum states and systematically generate Ans\"atze serves as an viable alternative to variational solutions that can circumvent the barren plateau problem, and has been utilized in quantum eigensolvers~\citep{bharti2020quantum, bharti2021iterative}, quantum simulation~\citep{bharti2021quantum, haug2022generalized, lau2021noisy, lau2022nisq, lim2021fast},  semidefinite programming~\citep{bharti2022noisy1} and finding non-equilibrium steady states~\citep{lau2023convex}. It is interesting to consider such concepts for problem agnostic settings such as neural networks for quantum machine learning~\citep{biamonte2017quantum} .

In our work, we discuss strategies derived from the variational method as the theoretical basis for optimization but avoiding the usage of parameterized quantum states. In what can be called  ``\emph{post-variational strategies}'', we take the classical combination of multiple fixed quantum circuits and find the optimal combination through solving a classical convex optimization problem~\citep{boyd2004convex}. 

Our post-variational quantum neural networks are based on ensemble strategies as opposed to optimizing parameterized quantum circuits. We exchange expressibility~\citep{du2020expressive, schuld2021effect, abbas2021power, jager2023universal} of the circuit with trainability of the entire model. Further, our method enforces a guarantee of finding a global minimum over our constructed convex optimization landscape in addition to ensuring termination of the algorithm. 

For the technical contributions, we generalize the idea of using linear combinations of unitaries to using linear combinations of quantum observables/quantum circuits with measurements. Based on this idea, we then propose two heuristic methods, as well as their hybrid, to construct post-variational quantum neural networks. Further, we show that our methods are compatible with classical shadows~\citep{huang2020predicting}, which replaces measurements on multi-qubit observables with one-shot random measurements. As a proof of concept, we employ our post-variational quantum neural network on the classical machine learning task of image classification, and obtain training performance better than that of variational algorithms and comparable to that of two-layer feedforward classical neural networks.

\section{Preliminaries}
\subsection{Notations} Given a field $\mathbb{F}$ of either real or complex numbers, for vector $\boldsymbol v \in \mathbb{F}^N$, we denote $\|\boldsymbol v\|_p$ as its $\ell_p$ norm. We denote $[h]$ to be the set $\{1, 2, \cdots, h\}$.

For a matrix $A \in \mathcal M_{M\times N}(\mathbb{F})$, let $A_{ij}$ be the $(i, j)$-element of $A$. We denote the spectral norm to be $\|A\| = \max_i \sigma_i(A)$, where $\sigma_i(A)$ are singular values of $A$, and the max norm to be $\|A\|_{\max} = \max_{ij} |A_{ij}|$. Furthermore, we denote $\sigma_{\min}(A)$ to be the smallest non-zero singular value of $A$. We denote $A^\intercal$ to be the transpose, $A^\dagger$ to be the conjugate transpose, and $A^+$ to be the pseudoinverse. Note that $\|A^+\| = \frac{1}{\sigma_{\min}(A)}$.

In this paper, we also refer to a variety of different loss functions that are used in different machine learning tasks. Given $d$ data points with ground truth $\{y_i\}_{i=1}^d$ and predicted values $\{\hat{y}_i\}_{i=1}^d$, where $\forall i, y_i, \hat{y}_i \in \mathbb{R}$, the root-mean-square error (RMSE) loss is defined as follows: $\mathcal{L}_{\mathrm{RMSE}} = \frac{1}{\sqrt{d}}\|\boldsymbol{y} - \boldsymbol{\hat{y}}\|_2$, while the mean absolute error (MAE) loss is defined by $\mathcal{L}_{\mathrm{MAE}} = \frac{1}{d}\|\boldsymbol{y} - \boldsymbol{\hat{y}}\|_1$. 

In the case where the ground truth is binary, i.e. $\forall i, y_i \in \{0, 1\}$ and predictions $\hat{y}_i \in [0, 1]$, the binary cross-entropy (BCE) loss is defined as $\mathcal{L}_{\mathrm{BCE}} = \frac{1}{d}\sum_{i=1}^d - y_i \log(\hat{y}_i) - (1-y_i) \log (1-\hat{y}_i)$.

\subsection{Classical shadows}The classical shadows method~\citep{huang2020predicting} introduces a randomized protocol to estimate the value of $\tr(O_i\rho)$ over $M$ observables up to additive error $\epsilon$ with $\mathcal O(\log(M))$ measurements. The quantum state is first evolved by a random unitary $U$ selected from a tomographically complete set of unitaries, i.e. $\rho \to U\rho U^\dagger$. Measuring the transformed quantum state in the computational basis, a string of outcomes $b\in \{0,1\}^n$ can be produced. One can then construct and store $U^\dagger|b\rangle\langle b|U$ classically using the stabilizer formalism~\citep{gottesman1997stabilizer}, the expectation of which can be viewed as a quantum channel $\mathcal{M}$, i.e.,
$\mathbb{E}[U^\dagger|b\rangle\langle b|U] = \mathcal{M}(\rho).$ By inverting the quantum channel and repeating the above process multiple times, we can obtain multiple ``\emph{classical shadows}'' $\hat{\rho}$ of the density matrix $\rho$ such that $\hat{\rho} = \mathcal{M}^{-1}(U^\dagger|b\rangle\langle b|U)$. These classical shadows can be used to approximate the value of the quantum state $\tr(O_i\rho)$ against a series of observables $O_1, O_2, \cdots, O_M$ via the median-of-means estimator~\citep{nemirovsky1983problem}. The number of classical shadows required to estimate $M$ observables within an additive error of $\epsilon$ is $\mathcal O(\log M \max_i\|O_i\|^2_{\mathrm{shadow}}/\epsilon^2)$, where $\|O_i\|^2_{\mathrm{shadow}}$, or the shadow norm, is dependent on the unitary ensemble used. When using tensor products of single-qubit Clifford gates (which is equivalent to measurement on a Pauli basis), $\|O_i\|^2_{\mathrm{shadow}} \le 4^L \|O_i\|^2$ for observable $O_i$ that acts non-trivially on $L$ qubits. For the rest of this paper, we refer to this particular property of an observable to be its \emph{locality}, and denote the shadow norm as $\|\cdot\|_S$.

\section{Post-variational strategies}

\subsection{Problem setting} We are given a dataset, $\mathcal{D} = \{(\boldsymbol{x_i}, y_i)\}^d_{i=1},$ where $d$ is the number of data, the feature vectors are ${\boldsymbol{x_i}} \in \mathbb R^\ell$, with $\ell$ being the number of features, and the labels $y_i \in \mathbb R$. Our task is to learn an estimator $\mathcal E_\vartheta (\boldsymbol{x_i})$ via parameterized neural networks with parameters $\vartheta$ that make use of quantum circuits such that $\hat y_i := \mathcal E_\vartheta (\boldsymbol{x_i})$. We aim to minimize the difference between the estimator $\hat{y_i}$ and ground truth $y_i$ over all data with a given loss function $\mathcal{L}$.

\subsection{Variational methods} Variational quantum algorithms have been regarded as the analogue to neural networks in quantum systems, and are also referred to as quantum neural networks (QNNs) when applied to machine learning tasks. Referring to a class of circuit-centric variational quantum algorithms operating on pure states~\citep{schuld2020circuit, farhi2018classification}, such algorithms operate by first encoding data $\boldsymbol{x}$ into a $n$-qubit quantum state $\rho(\boldsymbol{x}) \in \mathcal M_{2^n\times 2^n}(\mathbb{C})$. The quantum state is then transformed by an Ansatz $U(\boldsymbol{\theta})$, with parameters $\boldsymbol{\theta} \in \mathbb R^k$, to form a trial quantum state 
\begin{equation}
\varrho(\boldsymbol{\theta}, \boldsymbol{x}) = U(\boldsymbol{\theta})\rho({\boldsymbol{x}})U^\dagger(\boldsymbol{\theta}).
\end{equation}
We can then construct the estimator with an observable $O$ such that 
\begin{equation}
\mathcal E_{\boldsymbol{\theta}}(\boldsymbol{x}) := \tr(O\varrho(\boldsymbol{\theta}, \boldsymbol{x})).
\end{equation}
The parameters $\boldsymbol{\theta}$ are optimized by evaluating gradients of the quantum circuit via parameter-shift rules~\citep{mitarai2018quantum, schuld2019evaluating} and calculating updates of the parameter via gradient-based optimization on classical computers.

Apart from the simple construction of variational algorithms stated above that uses a single layer of data encoding followed by a variational Ansatz, there are variational algorithms make use of alternating data encoding layers and parameterized Ans\"atze, also known as data re-uploading models~\citep{perezsalinas2020data}. Such models are shown to have an exact mapping to the simpler construction of variational algorithms as shown above~\citep{jerbi2023quantum}, albeit with an exponential increase of qubits. In our work, as we do not specify the data encoding layer, we discuss the simpler construction by~\citet{schuld2020circuit} as a general structure that encompasses the mapped versions of data re-uploading models as well.

\begin{figure*}
    \centering
    \includegraphics[width=0.9\linewidth]{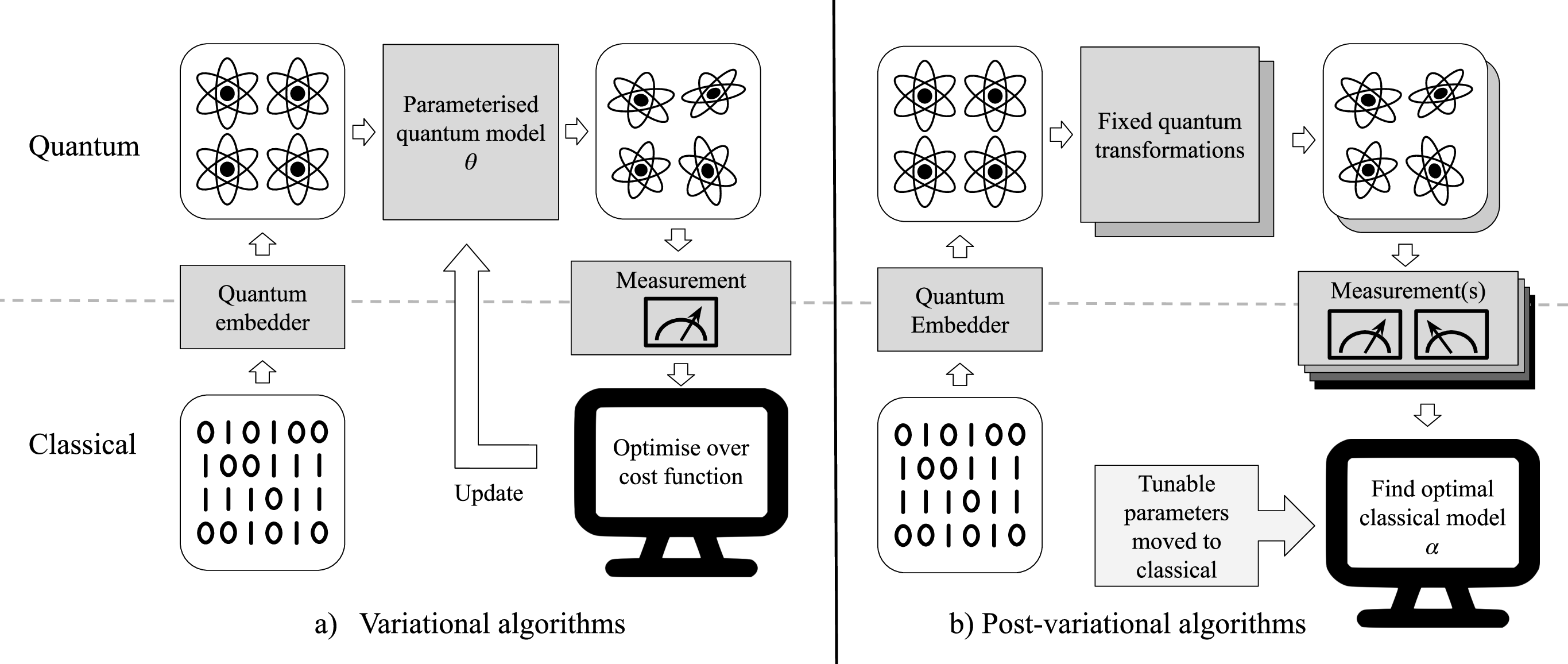}
    \caption{High-level sketch of post-variational strategies for near-term quantum computing. The variational method uses parameterized quantum circuits to transform embedded input data before suitable measurements, see panel a). Post-variational strategies, see panel b), use multiple fixed circuits, which may share similar circuit structure as the variational circuits, and multiple measurements of these circuits. The goal is to achieve approximately similar accuracy as the variational methods, while only performing optimization of classical parameters and retaining the power of quantum embeddings. }
    \label{fig:post_variational}
\end{figure*}

\subsection{Post-variational strategies} 

With the general difficulty and lack of training gurantees provided by variational algorithms, we propose what can be called ``post-variational'' strategies, where we replace parameterized quantum models with fixed quantum models and find the optimal combination of these models. In this setup, the quantum circuits are parameterized from the data only; all variational parameters are removed from the quantum circuits and trained classically. See Figure~\ref{fig:post_variational} for a high-level sketch of the idea behind post-variational strategies and Table~\ref{tab:compare} for a comparison of variational and post-variational strategies. The table is analogous to the works of \citet{huang2021near} and \citet{lau2023convex}, which are in the context of linear systems and non-equilibrium physics, respectively.

Note that the trainability of variational algorithms depends highly on the optimization landscape as dictated by the depth and width of the parameterized quantum circuit chosen, as well as the initial parameters selected for optimization. For the machine learning setting, our post-variational method exchanges the expressibility of the circuit with a guarantee that our algorithm can obtain the minima on the new convex optimization landscape we construct. We stress that we do not find the global minima on the variational optimization landscape or solve the barren plateau and equivalent problems such as exponential concentration~\citep{thanasilp2022exponential}, but instead provide an alternative approach to quantum neural networks. Later on in the paper, we will introduce heuristic design principles for selecting the fixed quantum circuits that aid in finding better solutions. 

\begin{table*}
\centering
\caption{Comparison of our post-variational strategies to the conventional variational neural network.}
\label{tab:compare}
\begin{tabularx}{\textwidth}{lXX}
\hline\hline
& Variational~\citep{cerezo2021variational} & Post-variational\\
\hline
Circuit &  Single parameterized quantum circuit. & Multiple fixed quantum circuits.\\
\hline
Measurement observable & Single fixed observable. & Multiple fixed observables.\\
\hline
Optimization parameters & Parameters exist on the quantum computer and modify the quantum circuit. &  Parameters exist on the classical computer and modify the coefficients defining the weighted combination of quantum circuits/observables.\\
\hline
Feedback loop & Requires hybrid classical-quantum loop to modify parameters on quantum computer. & Measurements are executed in one go on quantum computer. Parameter optimization  done classically without usage of the quantum computer.\\
\hline
Optimization landscape & Non-convex optimization with a potentially exponential number of local minima far from global minima. & Convex quadratic optimization.\\
\hline
Number of measurements & Dependent on optimization landscape and initial parameters. & Dependent on number of quantum circuits. See Table~\ref{tab:strategy_bound}.\\
\hline
Challenges & Barren plateaus. & Heuristic choice of fixed circuits and observables from an exponential amount of possible circuits.\\
\hline\hline
\end{tabularx}
\end{table*}

\subsection{Classical combinations of quantum observables} 
The idea of using linear combinations of unitaries has been shown to be useful in both fault-tolerant~\citep{childs2012hamiltonian, berry2015simulating} and near-term settings~\citep{huang2021near}. In contrast, our goal is to take combinations of outputs of quantum circuits in our post-variational strategy. We generalize the idea of taking classical combinations of quantum states (CQS) to taking the \emph{classical combinations of quantum observables} (CQO) by combining the Ansatz $U(\theta)$ and observable $O$ into a single parameterized observable $\mathfrak{O}(\theta)$ and replacing this observable with a collection of predefined trial observables $\mathfrak{O}_1, \mathfrak{O}_2, \cdots, \mathfrak{O}_m$. Under this setting, measurement results on the quantum circuits are then combined classically, where the optimal weights of each measurement is computed via classical neural networks over a convex optimization problem. 

Starting from our variational observable $O$, we can combine the parameterized Ansatz with the observable to obtain a parameterized observable $\mathfrak{O}(\boldsymbol{\theta}) := U^\dagger(\boldsymbol{\theta})OU(\boldsymbol{\theta})$. The expectation values are the same as $\tr(O\varrho(\boldsymbol{\theta}, \boldsymbol{x})) = \tr(\mathfrak{O}(\boldsymbol{\theta}) \rho(\boldsymbol{x})).$  Therefore, instead of optimizing over parameterized trial quantum states, we can optimize over a parameterized observable to achieve the same effect. We note that this consideration is similar to the Heisenberg picture in contrast to the Schr\"odinger picture that variational algorithms take to construct trial wavefunctions. 

As any observable can be expressed as a linear combination of Hermitians, one can express the observable $\mathfrak{O}(\boldsymbol{\theta})$ as linear combinations weighted by functions $\mathcal{F}_j: \mathbb{R}^k \to \mathbb{R}$ of $\boldsymbol{\theta}$ such that 
\begin{equation}
\mathfrak{O}(\boldsymbol{\theta}) = \sum_{j=1}^{M} \mathcal{F}_j(\boldsymbol{\theta}) \mathfrak{O}_j,
\end{equation}
where the total number of decomposed observables $M$ is upper bounded by $4^n$. We leave the construction of such decompositions to Appendix~\ref{appendix:decompose}.

Noting that the estimator can be written as follows, $\mathcal{E}_{\boldsymbol{\theta}}(\boldsymbol{x}) = \tr(\mathfrak{O}(\boldsymbol{\theta})\rho(\boldsymbol{x})) = \sum_{j=1}^M\mathcal{F}_j(\boldsymbol{\theta})\tr(\mathfrak{O}_j\rho(\boldsymbol{x}))$, we can model such systems by considering the entire system as a function $\mathcal{H}_\theta : \mathbb{R}^M \to \mathbb{R}$ and the $M$ terms of $\tr(\mathfrak{O}_j\rho(\boldsymbol{x}))$ such that 
\begin{equation}
\mathcal{E}_{\boldsymbol{\theta}}(\boldsymbol{x}) = \mathcal{H}_{\boldsymbol{\theta}}(\tr({\mathfrak{O}_j}\rho(\boldsymbol{x}))\}_{j=1}^M).
\end{equation}
By universal approximation theorem~\citep{hornik1989multilayer}, we can approximate function $\mathcal{H}_{\boldsymbol{\theta}}$ classically through neural network model $\mathcal{G}_{\boldsymbol{\alpha}}$ parameterized by classical parameters $\boldsymbol{\alpha}$, such that 
\begin{equation}
\mathcal{E}_{\boldsymbol{\theta}}(\boldsymbol{x}) \approx \mathcal{G}_{\boldsymbol{\alpha}}(\{\tr({\mathfrak{O}_j}\rho(\boldsymbol{x}))\}_{j=1}^M) = \mathcal{E}_{\boldsymbol{\alpha}}(\boldsymbol{x}).
\end{equation}
Under this framework, the estimator can be further extended to simulate non-linear systems.

In simple linear cases, a concrete procedure for creating classical combinations of quantum observables is as follows. Make a first approximation and consider only a subset $\mathcal S$ of size $\vert\mathcal S\vert = m$ of the trial observables.  This leads to 
\begin{equation}
\mathfrak{O}(\boldsymbol{\theta}) \approx \sum_{j: \mathfrak{O}_j\in \mathcal S} \mathcal{F}_j(\boldsymbol{\theta})\mathfrak{O}_j.
\end{equation}
We then make our second approximation and consider all the functions independently. 
Given the optimal observable $\mathcal O(\boldsymbol{\theta^*})$, we consider the value of $\mathcal{F}_j(\boldsymbol{\theta^*})$ as a classical parameter $\alpha_j \in \mathbb R$. This leads to
\begin{equation}
\mathfrak{O}(\boldsymbol\theta^*) \to \mathfrak{O}(\boldsymbol{\alpha}) := \sum_{j: \mathfrak{O}_j\in \mathcal S}\alpha_j\mathfrak{O}_j.
\end{equation}
Hence, we obtain a linear combination of observables, which, assuming the approximations are chosen judiciously, will contain much of the initial complexity. 

\subsection{Comparisons with CQS} 
The CQS approach \citep{huang2021near} was originally proposed to solve linear systems of equations of $Ax=\boldsymbol{b}, A\in \mathcal{M}_{N\times N}(\mathbb C), \boldsymbol{b}\in \mathbb{C}^N$, where the solution of $x$ can be found by taking the pseudoinverse of $A$ such that $x=A^+\boldsymbol{b}$. With the variational approach, $A^+$ is modeled by a variational Ansatz $U(\theta)$, while the CQS approach takes classical combinations of fixed ``problem-inspired'' Ans\"atze $U_1, U_2, \cdots, U_{m_{\mathrm{CQS}}}$ generated by the matrix $A$ in conjunction with an Ansatz tree. An estimator $\hat{A}^+$ of $A^+$ is constructed such that $\hat{A}^+ = \sum_{i'=1}^{m_{\mathrm{CQS}}} \gamma_{i'}U_{i'}$, where $\gamma_i \in \mathbb{R}_{>0}$.

We note that the CQS approach can be viewed as a problem-inspired analogue to post-variational strategies when viewing the problem under the Hamiltonian loss. For any $\ket {\hat x}$, one can define the Hamiltonian loss as  $\mathcal{L}_{\mathrm{Ham}}(\ket {\hat x}) := \langle \hat{x} | O |\hat{x}\rangle$~\citep{bravo_prieto2020variational, huang2021near}, where $O= A^\dagger (\mathbb{I} - |b\rangle\langle b|) A$.
In particular, using the CQS Ansatz, we obtain $|\hat{x}\rangle = \hat{A}^+|b\rangle = \sum_{i'=1}^{m_{\mathrm{CQS}}} \gamma_{i'}U_{i'}|b\rangle$, and hence the loss function evaluates to
\begin{align}
&\mathcal{L}_{\text{Ham, CQS}}\nonumber\\
&= \tr\left( \left(\sum_{{i'}=1}^{m_{\mathrm{CQS}}}\gamma_{i'}U_{i'}^\dagger\right)O\left(\sum_{{j'}=1}^{m_{\mathrm{CQS}}} \gamma_{j'}U_{j'}\right) | b \rangle \langle b |\right)\\
&\begin{multlined}[t]=\sum_{{i'}=1}^{m_{\mathrm{CQS}}} \gamma_{i'}^2\tr(U_{i'}^\dagger O U_{i'}| b \rangle \langle b |) \\+ \sum_{{i'}=1}^{m_{\mathrm{CQS}}} \sum_{{j'}\ne {i'}} \gamma_{i'}\gamma_{j'}\tr\left(\frac{(U_{i'}^\dagger O U_{j'}) + (U_{j'}^\dagger O U_{i'})}{2}| b \rangle \langle b |\right).\end{multlined}
\end{align}
Collecting the terms and corresponding coefficients of $U_{i'}^\dagger O U_{i'}$ and $\frac{1}{2} ((U_{i'}^\dagger O U_{j'}) + (U_{j'}^\dagger O U_{i'}))$ into predefined trial observables $\mathfrak{O}_j$ and their corresponding coefficients $\alpha_j \in \mathbb{R}$, we find that we can write the Hamiltonian loss of the CQS approach as the mean-absolute error (MAE) loss of the post-variational approach as
\begin{align}
\mathcal{L}_{\text{Ham, CQS}} &= \sum_{j=1}^{m} \alpha_j\tr(\mathfrak{O}_j | b \rangle \langle b |)\\
&= \left|\sum_{i=1}^{m} \alpha_j\tr(\mathfrak{O}_j | b \rangle \langle b |) - 0\right| \\
&= \mathcal{L}_{\text{MAE, Post-Variational}}\\
&\le \mathcal{L}_{\text{RMSE, Post-Variational}}
\end{align}
where we note that 0 serves as the ground truth of the MAE loss as $\langle x| A^\dagger(\mathbb{I} - |b\rangle\langle b|)A |x\rangle =0$. Here, $m = m_{\mathrm{CQS}}^2$ by counting the different terms. Note that we can consider the absolute value directly as $\mathbb{I} - |b\rangle\langle b|$ is a projector, indicating that its eigenvalues are 0 or 1, which further implies that $0 \le \bra{\hat x}A^\dagger (\mathbb{I} - |b\rangle\langle b|) A\ket{\hat x} =\bra{\hat x}O\ket{\hat x} = \sum_{j=1}^{m} \alpha_j\tr(\mathfrak{O}_j | b \rangle \langle b |) $.

\begin{figure*}
    \centering
    \includegraphics[width=\linewidth]{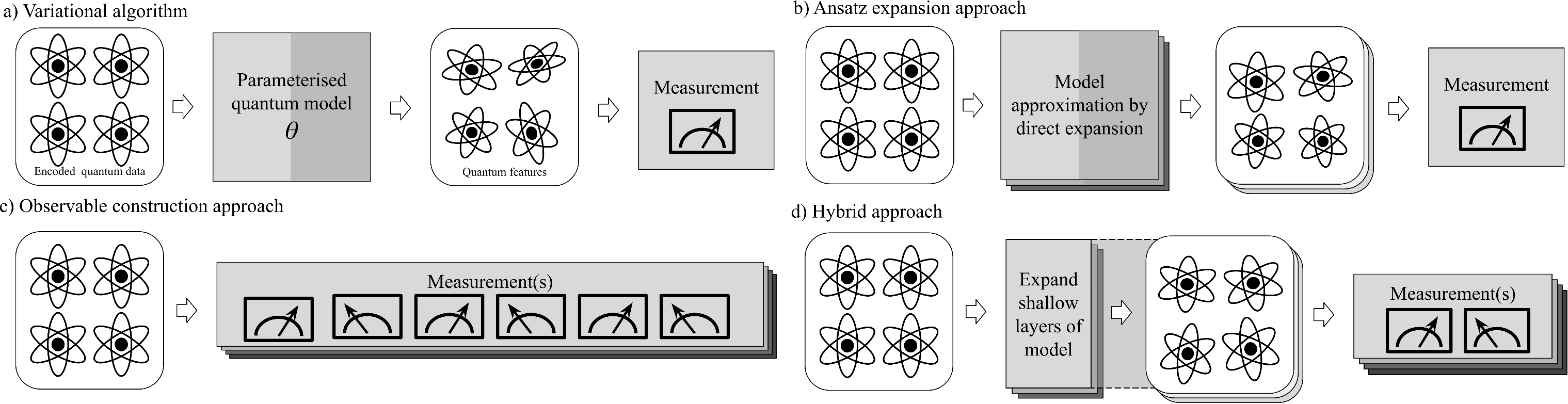}
    \caption{Here we provide an overview of the various strategies introduced in the later text. Using the variational circuit (a) as a baseline, the Ansatz expansion approach (b) does model approximation by directly expanding the parameterized Ansatz into an ensemble of fixed Ans\"atze. On the other hand, the observable construction approach (c) foregoes all usage of an Ansatz and directly constructs a measurement observable by ensembling and taking the classical combination of various predefined trial observables. The hybrid approach (d) does both the expansion of the Ansatz, albeit only on the shallower layer of the model, and replaces the deeper layers of the model as well as the measurement observable with a series of measurement trial observables that can be used to retrieve a classical combination.}
    \label{fig:strategy}
\end{figure*}

\section{Design principles of post-variational quantum circuits}
\label{ch:strategy}
We now discuss multiple possible heuristic strategies to decide on the trial observables $\mathfrak{O}_j$ and construct the circuits for our post-variational algorithm to minimize operations on quantum computers. See Figure~\ref{fig:strategy} for an overview of the heuristic strategies. Recall that we construct a post-variational algorithm by ensembling multiple trial observables such that the final target observable can be learned by combination of the measurement results based on a learned function:
\begin{equation}
 \mathcal{E}_{\boldsymbol{\theta}}(\boldsymbol{x}) \approx \mathcal{E}_{\boldsymbol{\alpha}}(\boldsymbol{x}) =\mathcal{G}_{\boldsymbol{\alpha}}(\{\tr({\mathfrak{O}_j}\rho(\boldsymbol{x}))\}_{j:\mathfrak{O}_j\in \mathcal{S}}),
\end{equation}
where for linear cases
\begin{equation}
\mathcal{G}_{\boldsymbol{\alpha}}(\{\tr(\mathfrak{O}_j\rho(x))\}_{j:\mathfrak{O}_j\in \mathcal{S}}) = \sum_{j:\mathfrak{O}_j \in \mathcal{S}} \alpha_j \tr(\mathfrak{O}_j\rho(\boldsymbol{x})).
\end{equation}

\begin{figure}
\includegraphics[width=0.9\linewidth]{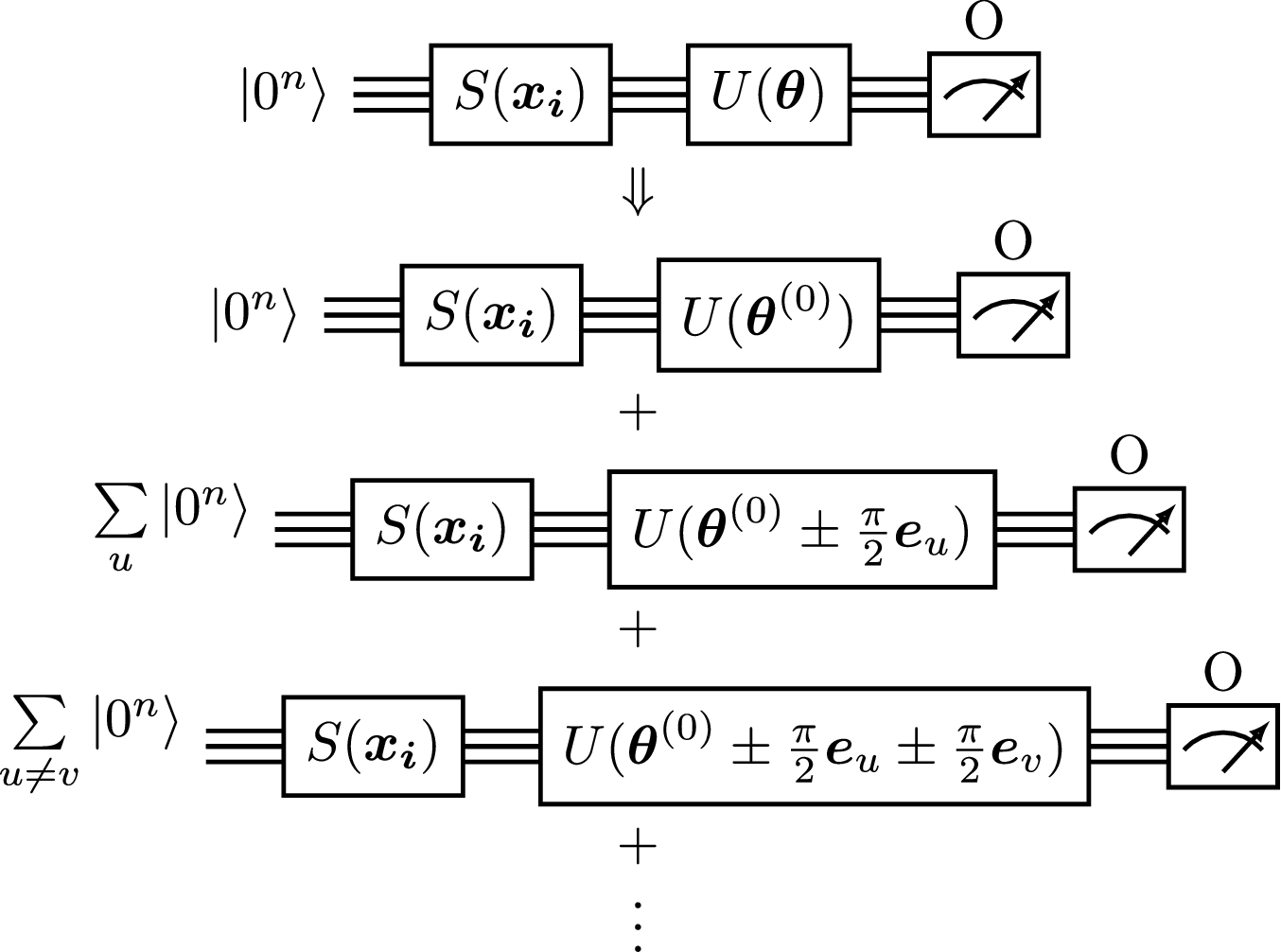}
\caption{Ansatz expansion approach to post-variational algorithm construction. Starting from a variational Ansatz, multiple non-parameterized quantum circuits are constructed by Taylor expansion of the Ansatz around a suitably chosen initial setting of the parameters $\boldsymbol{\theta^{(0)}}$. Gradients and higher-order derivatives of circuits can be obtained by parameter-shift rule. The different circuits are linearly combined with classical coefficients that are optimized via convex optimization. }
\label{fig:ansatz_expand}
\end{figure}

\subsection{Ansatz expansion} The first strategy to constructing post-variational algorithms is to first begin with a variational algorithm and replace the parameterized Ansatz $U(\boldsymbol{\theta})$ with an ensemble of $p$ fixed Ans\"atze $\{U_a\}_{a=1}^p$. \citet{huang2021near}'s CQS approach uses this strategy to solve linear systems, generating the fixed problem-inspired  Ans\"atze through the usage of an Ansatz tree. 

For quantum machine learning tasks, such problem-inspired Ans\"atze may not be readily available, hence we generate the fixed Ans\"atze from a problem-agnostic Ansatz through some expansion of the Ansatz.

Works in dequantizing parameterized quantum circuits~\citep{schreiber2023classical,sweke2023potential, landman2023classically} show that some parameterized quantum circuits that use data reuploading models~\citep{perezsalinas2020data} can be expressed with some error tolerance by classical models through the usage of partial Fourier series~\citep{gilvidal2020input,schuld2021effect}. However, such expansions are conditioned on specific data encoding circuits and further, the ``classical surrogate'' models does not retain the feature extraction abilities of the original Ansatz as the information of the Ansatz is compacted into the Fourier coefficients of the surrogate model. Our usage of truncated Taylor series as a method to generate post-variational circuits retains the such abilities of the original Ansatz circuit.

Inspired by the idea of using truncated Taylor series as a linear combination of unitaries~\citep{berry2015simulating}, we use truncated Taylor polynomial expansions of the variational parameters in such quantum circuits to generate the fixed Ans\"atze of our model. We use parameter-shift rules~\citep{mitarai2018quantum, schuld2019evaluating} to find derivatives of the trace-induced measurements of parameterized quantum circuits, as well as extending to derivatives of higher-orders~\citep{huembeli2021characterizing, mari2021estimating}. 

Note that as parameter-shift rules provide the gradient of the entire quantum circuit instead of the variational Ansatz. We in fact decompose the variational \emph{observable} $\mathfrak O(\boldsymbol{\theta}) = U^\dagger(\boldsymbol{\theta})OU(\boldsymbol{\theta})$ into a truncated Taylor series, rather than considering the linear combination of multiple Ans\"atze generated from derivatives of the Ansatz.
This approach falls under the framework of using CQO, instead of directly decomposing the Ansatz itself.

With the parameter-shift rule, one can compute the gradient of the variational observable $\mathfrak O(\boldsymbol{\theta})$ by a linear combination of same circuit, but with the parameters shifted up and down from the initial value by a specific value dependent on the gate~\citep{wierichs2022general}. In the post-variational setting, as we have to obtain the circuits whose combination can be used to find the gradient instead of the exact combination that retrieves the gradients and higher-order gradients, one can simply decompose the parameterized gates into a series of rotational Pauli gates to obtain simpler parameter-shift rules, which would then be combined via derivative product rules when under the variational setting~\citep{crooks2019gradients}. The parameter-shift rule for rotational Pauli gates show that to compute gradients, one simply needs to shift that specific parameter up and down by $\frac{\pi}{2}$~\citep{mitarai2018quantum}. Further, \citet{mari2021estimating} has shown that higher-order gradients of a single parameter can be computed via a linear combination of circuits that set the parameter to the values of $\{0, \pm\frac{\pi}{2}\}$ and that through full Taylor expansion of the circuit, one can express $U^\dagger(\boldsymbol{\theta})OU(\boldsymbol{\theta})$ of any arbitrary $\boldsymbol\theta\in \mathbb{R}^k$ as a linear combination of $U^\dagger(\boldsymbol{\theta'})OU(\boldsymbol{\theta'})$ where $\boldsymbol{\theta'} \in \{0, \pm\frac{\pi}{2}\}^k$.

For our Ansatz expansion strategy, we assume the simple case of initializing all parameters to 0 as rotational gates evaluate to identity at parameter 0 and as a result, we can shorten the depth of the circuit during execution, but note that random initialization also would work. Truncating at the $R$-th derivative order, to generate the additional circuits, we simply select all combinations of size $\le L$ from the $k$ parameters in $\theta$, where each parameter corresponds to a single rotational gate, and set each parameter to $\pm\frac{\pi}{2}$. See Figure~\ref{fig:ansatz_expand} for a representation of such quantum circuits. One can then see that for truncation at derivative order $R$, the number of circuits required is then
\begin{equation}
\sum_{\ell=0}^R \binom{k}{\ell}2^\ell \in \mathcal O(2^{R}k^{R}).
\end{equation}

We note that the number of circuits required including the full truncated Taylor series is taken to a power of the original number parameters, which can scale largely when a deep Ansatz is used. To reduce the number of circuits required, we can directly prune the circuits that generate small circuits. Suppose we have a circuit that has a variational observable $U^\dagger(\boldsymbol\theta)OU(\boldsymbol\theta)$ that we want to take the gradient of on the $u$-th parameter $\theta_u$ using parameter shift rules. We then construct two variational observables $U^\dagger(\boldsymbol\theta+\frac{\pi}{2}\boldsymbol{e_u})OU(\boldsymbol\theta+\frac{\pi}{2}\boldsymbol{e_u})$ and $U^\dagger(\boldsymbol\theta-\frac{\pi}{2}\boldsymbol{e_u})OU(\boldsymbol\theta-\frac{\pi}{2}\boldsymbol{e_u})$. Now considering the data, if the MSE over 
\begin{equation}
\label{eqAnsatzDiff}
\begin{multlined}[b]
\tr(U^\dagger(\boldsymbol\theta+\tfrac{\pi}{2}\boldsymbol{e_u})OU(\boldsymbol\theta+\tfrac{\pi}{2}\boldsymbol{e_u})\rho(\boldsymbol{x_i})) \\
- \tr(U^\dagger(\boldsymbol\theta-\tfrac{\pi}{2}\boldsymbol{e_u})OU(\boldsymbol\theta-\tfrac{\pi}{2}\boldsymbol{e_u})\rho(\boldsymbol{x_i}))
\end{multlined}
\end{equation}
is small, then the gradient on $\theta_u$ of $U^\dagger(\boldsymbol\theta)OU(\boldsymbol\theta)$ is also small, and we can prune one of the circuits (or even both) from the list of circuits as further higher-order gradients based on the gradient circuits would also be small.

\begin{figure}
\includegraphics[width=0.6\linewidth]{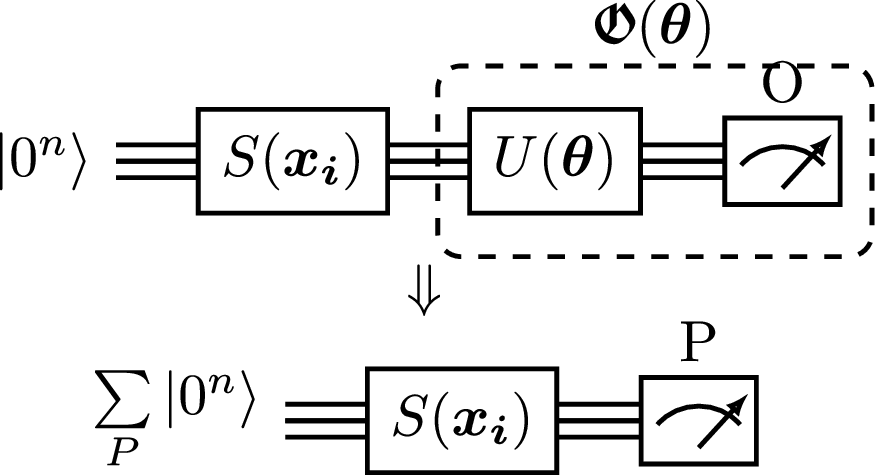}
\caption{Observable construction approach to post-variational algorithm construction. A variational observable can be directly constructed by an ensemble of Pauli observables, and may serve as a potential approximation under locality restrictions.}
\label{fig:observable_construct}
\end{figure}

\subsection{Observable construction} In contrast to the Ansatz expansion strategy, where we generate fixed Ans\"atze either through Ansatz trees or Taylor expansions, in the observable construction strategy, we take the CQO strategy discussed in the previous section at face value, decomposing the parameterized observable $\mathfrak{O}(\boldsymbol{\theta})$ against the basis of quantum observables (namely, Paulis), such that $\mathfrak{O}(\boldsymbol{\theta^*}) \to \mathfrak{O}(\boldsymbol{\alpha}) = \sum_{P\in \{\mathcal{I}, \mathcal{X}, \mathcal{Y}, \mathcal{Z}\}^{\otimes n}} \alpha_P P$.  See Figure~\ref{fig:observable_construct} for a representation of such quantum circuits.

However, such a strategy scales exponentially with the number of qubits used in the system, creating an analogue to the barren plateau. Further heuristic selections of observables would be required to prevent such exponential scaling. We find that considering all Pauli observables within a certain locality $L$ as a good heuristic given that most physical Hamiltonians are local. Furthermore, \citet{huang2023learning} show that this truncation by locality of Pauli observables, which they refer to as ``low-degree approximations'', is proven to be a good surrogate for the unknown observable-to-be-learned under the circumstance that the quantum state $\rho$ that the observable is applied to is sampled from distributions that are invariant under single-Clifford gates. They further show that these low-degree approximations would require only a quasi-polynomial number of samples in relation to the number of qubits $n$ to learn.

Under the circumstance that the target observable $\mathfrak{O}$ is $L$-local, we can employ the classical shadows protocol~\citep{huang2020predicting} to reduce the quantum measurements needed on the quantum computer while achieving the same additive error term $\epsilon$ determined by the loss term. For each data sample $x$ such that we can prepare a quantum state $\rho(\boldsymbol{x})$, we can obtain a series of classical shadows of the quantum state $\hat{\rho}(\boldsymbol{x})$ that can be stored classically such that we can estimate the values of $\tr{(P\rho(\boldsymbol{x}))}$ where $P \in \{\mathcal{I}, \mathcal{X}, \mathcal{Y}, \mathcal{Z}\}^{\otimes n}:|P|\le L$. The classical shadows protocol is able to reduce the number of measurements required for all values for $\tr(P\rho(\boldsymbol{x}))$ from a polynomial dependency of the number of qubits to a logarithmic dependency. 

Consider the case where the observables are the complete set of $L$-local Paulis. The number of observables $q$ is then 
\begin{equation}
\sum_{\ell=0}^L \binom{n}{\ell}3^\ell \in \mathcal O(3^{L}n^{L}).
\end{equation}
If we use the classical shadows method, then we can reduce the number of random measurements of the circuit to 
\begin{equation}
\mathcal O(3^L\log q) \in \mathcal O(3^LL\log n)
\end{equation}

\begin{figure}
\includegraphics[width=0.9\linewidth]{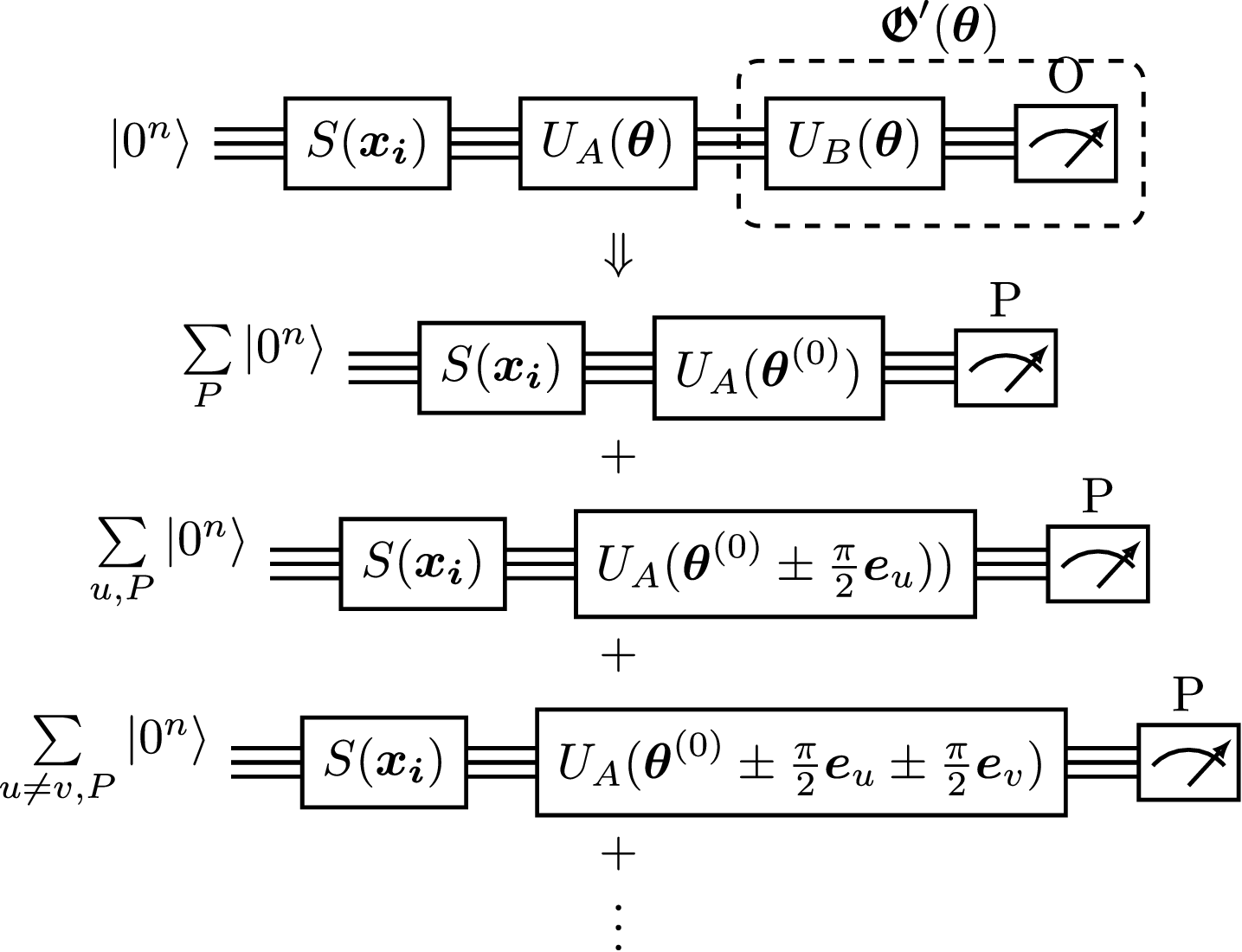}
\caption{Hybrid approach to post-variational algorithm construction. This approach is a combination of the approaches shown in Figures~\ref{fig:ansatz_expand} and \ref{fig:observable_construct}.}
\label{fig:hybrid}
\end{figure}

\subsection{Hybrid approach} 
When taking the strategy of observable construction, one additionally may want to use Ansatz quantum circuits to increase the complexity of the model. Hence, we discuss a simple hybrid strategy that combines both the usage of Ansatz expansion and observable construction. See Figure~\ref{fig:hybrid} for a representation of such quantum circuits.

Recall the definition of the parameterized observable as $\mathfrak{O}(\boldsymbol{\theta}) = U^\dagger(\boldsymbol{\theta})OU(\boldsymbol{\theta})$. Instead of expanding $U(\boldsymbol{\theta})$ directly, we split the Ansatz $U(\boldsymbol{\theta})$ into two unitaries (based on cutting the circuit at a certain depth, for example), such that 
\begin{equation}
U(\boldsymbol{\theta}) = U_B(\boldsymbol{\theta})U_A(\boldsymbol{\theta}).
\end{equation}
We can now write $\mathfrak{O}(\boldsymbol{\theta}) = U_A^\dagger(\boldsymbol{\theta})U_B^\dagger(\boldsymbol{\theta})OU_B(\boldsymbol{\theta})U_A(\boldsymbol{\theta}).$
We denote $\mathfrak{O}'(\boldsymbol{\theta}) := U_B^\dagger(\boldsymbol{\theta})OU_B(\boldsymbol{\theta})$.

We first decompose $\mathfrak O'(\boldsymbol{\theta})$ directly into a linear combination of Paulis via observable construction approach. Next, we expand the remaining Ansatz $U_A(\boldsymbol{\theta})$ using the Ansatz expansion method. Our next step is to prune the number of circuits to a reasonable amount.

From the Ansatz expansion method, we can obtain from the initial circuit, various derivative circuits that make up the derivatives of the initial circuit. Combining the derivative circuits with a global observable results in a derivative of global cost functions, which have been shown to potentially produce small derivatives and encounter barren plateaus by \citet{cerezo2021cost}. Hence, we can set a locality $L$ such that all derivatives circuits that have observables larger that $L$ are pruned from our list of circuits. 

Consider the subset of circuits that have observables within locality $L$ from our pruned list of circuits. If we set the initial parameters such that Ansatz evaluates to identity, then the initial circuits have at least the expressibility of the observable construction method that has the imposed constraint on locality. The derivative circuits with local observables can then simulate to a limited extent observables that lay outside of the restricted locality $L$ that we constrain our observables to. This provides additional expressibility without having to measure on all possible Paulis. Given this heuristical boost in expressibility stemming from the derivative circuits, we retain only the subset of circuits with observables within locality $L$ to prevent an exponential amount of circuits. With the observables used restricted to local observables, we can once again employ the classical shadows method to reduce measurements.

Recall that in the Ansatz expansion method, we prune small gradients by finding the difference of measured results of the parameter-shifted circuits against an observable $O$. In the hybrid method, we can further conduct this type of pruning. As we are dealing with multiple observables, instead of finding the difference between the measurements, one can also directly measure the fidelity between the states $U(\boldsymbol\theta+\frac{\pi}{2}\boldsymbol{e_u})\rho(\boldsymbol{x_i})U^\dagger(\boldsymbol\theta+\frac{\pi}{2}\boldsymbol{e_u})$ and $U(\boldsymbol\theta-\frac{\pi}{2}\boldsymbol{e_u})\rho(\boldsymbol{x_i})U^\dagger(\boldsymbol\theta-\frac{\pi}{2}\boldsymbol{e_u})$. This can be justified by upper bounding the MSE of Equation~\ref{eqAnsatzDiff} with the trace distance, as follows.

Let $\varrho(\boldsymbol{x}, \boldsymbol{\theta}) = U(\boldsymbol\theta)\rho(\boldsymbol{x})U^\dagger(\boldsymbol\theta)$. Then Equation~\ref{eqAnsatzDiff} can be rewritten as 
\begin{equation}
\begin{multlined}[b]
\tr(P\varrho(\boldsymbol{x}, \boldsymbol{\theta}+\tfrac{\pi}{2}\boldsymbol{e_u})) - \tr(P\varrho(\boldsymbol{x}, \boldsymbol{\theta}-\tfrac{\pi}{2}\boldsymbol{e_u}))\\
= \tr(P(\varrho(\boldsymbol{x}, \boldsymbol{\theta}+\tfrac{\pi}{2}\boldsymbol{e_u})-\varrho(\boldsymbol{x}, \boldsymbol{\theta}-\tfrac{\pi}{2}\boldsymbol{e_u}))).
\end{multlined}
\end{equation}
The MSE of over all data points can then be written as 
\begin{equation}
\frac{1}{|\mathcal D|}\sum_{x_i\in\mathcal{D}} \left|\tr(P(\varrho(\boldsymbol{x}, \boldsymbol{\theta}+\tfrac{\pi}{2}\boldsymbol{e_u})-\varrho(\boldsymbol{x}, \boldsymbol{\theta}-\tfrac{\pi}{2}\boldsymbol{e_u})))\right|^2.
\end{equation}
From the definition of trace distance, we note that 
\begin{align}
&\tr(P(\varrho(\boldsymbol{x}, \boldsymbol{\theta}+\tfrac{\pi}{2}\boldsymbol{e_u})-\varrho(\boldsymbol{x}, \boldsymbol{\theta}-\tfrac{\pi}{2}\boldsymbol{e_u})))^2 \\
&\le  4\left(\tr|\varrho(\boldsymbol{x}, \boldsymbol{\theta}+\tfrac{\pi}{2}\boldsymbol{e_u})-\varrho(\boldsymbol{x}, \boldsymbol{\theta}-\tfrac{\pi}{2}\boldsymbol{e_u})|\right)^2\\
&\le 4\left(1 - F\left(\varrho(\boldsymbol{x}, \boldsymbol{\theta}+\tfrac{\pi}{2}\boldsymbol{e_u}), \varrho(\boldsymbol{x}, \boldsymbol{\theta}-\tfrac{\pi}{2}\boldsymbol{e_u})\right)\right)
\end{align}
One can then see that if the fidelity of the shifted Anst\"aze are large, we can directly prune all the quantum circuits related to the shifted Ans\"atze. The pruning strategy in the Ansatz expansion method still works in this setting without needing to evaluate the result on all local observables. Given that $\rho(\boldsymbol{x_i})$ is a pure state in our case, the fidelity can be computed via the probability of the outcome of the quantum state $S^\dagger(\boldsymbol{x_i})U^\dagger(\boldsymbol\theta+\frac{\pi}{2}\boldsymbol{e_u})U(\boldsymbol\theta-\frac{\pi}{2}\boldsymbol{e_u})S(\boldsymbol{x_i})\ket{0^n}$ being the bit string $0^n$~\citep{havlivcek2019supervised}. One can conduct further pruning based on the observables after measurement to further reduce the amount of circuits.

\section{Architecture design of post-variational neural networks}
\label{ch:architecture}
Let it be possible to approximate the target observable $\mathfrak{O}(\boldsymbol{\theta})$ obtained in a variational algorithm with a predefined collection of trial observables $\mathcal{S} = \{\mathfrak{O}_1, \mathfrak{O}_2, \cdots \mathfrak{O}_m\}$, such that $\mathcal{E}_{\boldsymbol{\theta}}(\boldsymbol{x}) = \tr (\mathfrak{O}(\boldsymbol{\theta}) \rho(\boldsymbol{x})) \approx \sum_{j: \mathfrak{O}_j\in \mathcal S} \alpha_j\tr(\mathfrak{O}_j\rho(\boldsymbol{x}))$. Then the optimal values of the coefficients $\alpha_1, \alpha_2, \cdots, \alpha_m$ are found with classical optimization by perceptrons or neural networks. Following the idea of neural networks, we call each individual quantum circuit/observable pair as a quantum neuron~\citep{cao2017quantum}.

\begin{definition} \label{defHybrid}
We define a \emph{\texttt{(p, q)}-hybrid strategy} as a strategy with a total of $p$ Ans\"atze $\{U_a\}_{a=1}^p$ and $q$ observables $\{O_b\}_{b=1}^q$, such that the output of each circuit for any input $\rho$ is $\tr(\mathfrak{O}_{(a,b)} \rho) = \tr(U_a^\dagger O_b U_a \rho)$. 
\end{definition}
As the outcome of quantum neuron measurements are probabilistic, multiple measurements need to be conducted in order to produce a good estimate. 
\begin{figure}
\includegraphics[width=0.6\linewidth]{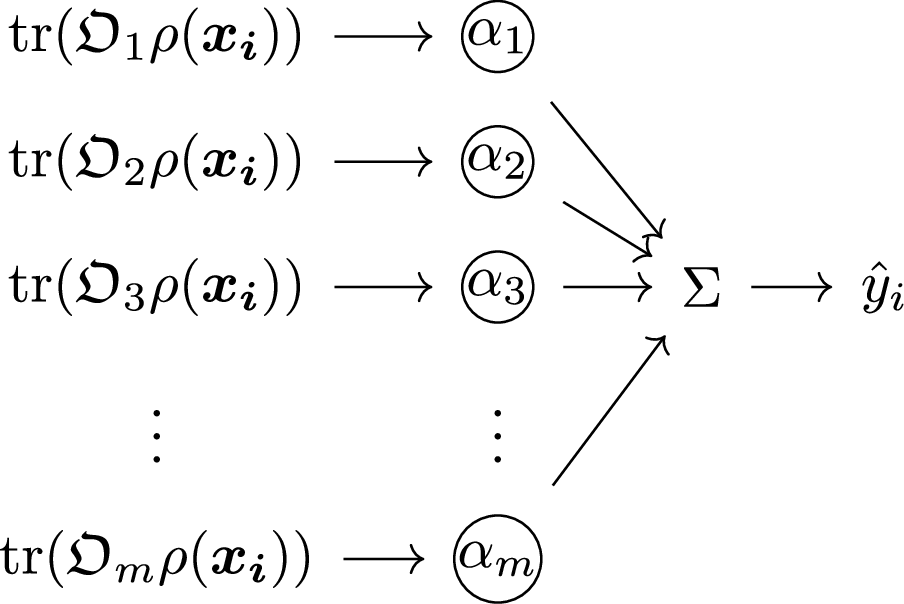}
\caption{Classical linear regression approach of the hybrid model for a \texttt{(p, q)}-hybrid strategy. Let $m=pq$. Given a set of observables $\mathfrak O_{j}$, $j \in [m]$, and data embedding $\rho(\boldsymbol{x_i})$, a simple linear quantum neural network can be constructed by combining the outputs in a linear fashion using combination parameters $\alpha_{j}$. This obtains the output label $\hat y_i$.}
\label{fig:lreg}
\end{figure}
We define matrix $Q\in \mathcal M_{d\times m}(\mathbb{R})$ such that 
\begin{equation}
Q_{ij} := \tr(\mathfrak{O}_j \rho(\boldsymbol{x_i})), \label{eq:q_def}
\end{equation}
where $\{\mathfrak{O}_j\}_{j=1}^m$ is the collection of observables produced by a \texttt{(p, q)}-hybrid strategy and $m=pq$. We use $Q$ as the input to a classical linear regression model (Figure~\ref{fig:lreg}). We minimize the root-mean-square error as follows
\begin{align}
\mathcal{L}_{\mathrm{RMSE}} &= \sqrt{\frac{1}{d} \sum_{i=1}^d \left (y_i -\sum_{j=1}^{m} \alpha_{j}\tr(\mathfrak{O}_{j} \rho(\boldsymbol{x_i}))\right)^2}\\
&= \frac{1}{\sqrt d} \sqrt{\sum_{i=1}^d\left(y_i-\sum_{j=1}^m\alpha_jQ_{ij}\right)^2} \\
&= \frac{1}{\sqrt d} \|\boldsymbol{Y}-Q\boldsymbol{\alpha}\|_2,\label{eq:rmse}
\end{align}
where $\boldsymbol{Y} = (y_1, y_2, \cdots, y_d)^\intercal$. The closed form solution to solving linear regression problems obtains $\boldsymbol{\alpha} = Q^+\boldsymbol{Y}$, or in the special case where $Q$ has full column rank and thus $Q^\intercal Q$ is positive definite, we obtain $\boldsymbol{\alpha} = (Q^\intercal Q)^{-1}Q^\intercal \boldsymbol{Y}$.

Here, we use a simple linear regression model as the main classical model. However, this model can be extended to any classical model including feed-forward models; we only mention linear regression models in detail for easy analysis with closed-form solutions. Furthermore, the results can be extended to classification problems by adding an extra sigmoid or softmax function at the end of the output. 

Structurally, the post-variational quantum neural network mimics a two-layer classical multilayer perceptrons that has frozen parameters but multiple activations for the first layer. The fixed unitary Ans\"atze correspond to fixed linear feature extractors similar to the first layer of multilayer perceptrons, while the measurement correspond to nonlinear activation functions~\citep{schuld2020circuit}. As we measure on different observables, this can be viewed akin to using different types of non-linear activation functions. Lastly, the linear layer in our framework acts exactly like that of the second linear layer in classical two-layer neural networks.

\section{Error analysis of post-variational neural networks}
\label{ch:erroranalysis}
Given the design of post-variational neural networks based on a simple linear regression model as the classical ensemble model for quantum neurons.
Measurement outcomes in quantum systems are probabilistic, and hence estimated quantities come with errors that may be carried and propagated throughout the entire neural network system. In this section, we discuss the effects of such errors as well as the number of measurements required to achieve a target final error.

\subsection{Estimation errors} We first show the simple result for the total number of measurement for all terms $\tr(\mathfrak{O}_{j} \rho(\boldsymbol{x_i}))$.
\begin{proposition}[Measurements needed for direct estimation of all quantum neurons]\label{proposition:measure}
Consider a \texttt{(p, q)}-hybrid strategy as in Definition \ref{defHybrid}. Using the sample mean over multiple iterations as an estimator to evaluate the output of each of the $m=pq$ quantum neurons, the number of quantum measurements required to estimate the output of all quantum neurons over all $d$ data points such that the all $m\times d$ outputs have an additive error of $\epsilon_H$ with a probability of $1-\delta$ falls within 
$$\mathcal O\left(\frac{md}{\epsilon_H^2}\log \frac{md}{\delta}\right).$$
\end{proposition}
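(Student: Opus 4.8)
The plan is to treat each of the $md$ neuron outputs independently, bound the cost of each single estimation via a concentration inequality, and then combine the per-neuron failure probabilities with a union bound. Fix a data point $\boldsymbol{x_i}$ and a trial observable $\mathfrak{O}_j = U_a^\dagger O_b U_a$ from the \texttt{(p, q)}-hybrid strategy. A single measurement of the prepared state $U_a\rho(\boldsymbol{x_i})U_a^\dagger$ in the eigenbasis of $O_b$ returns a random outcome $Z$ equal to one of the eigenvalues of $O_b$, with $\mathbb{E}[Z] = \tr(\mathfrak{O}_j\rho(\boldsymbol{x_i})) = Q_{ij}$. Crucially, $Z$ is bounded, since $|Z| \le \|O_b\| \le 1$ for the Pauli observables of interest; the argument only needs $\|O_b\| = \mathcal{O}(1)$, which enters the final bound as a constant factor. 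The estimator is the sample mean $\bar{Z}_N = \frac{1}{N}\sum_{t=1}^N Z^{(t)}$ over $N$ i.i.d. shots.

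First I would apply Hoeffding's inequality to the bounded i.i.d. variables $Z^{(1)}, \dots, Z^{(N)} \in [-1,1]$, which gives $\Pr[\,|\bar{Z}_N - Q_{ij}| \ge \epsilon_H\,] \le 2\exp(-N\epsilon_H^2/2)$. Demanding that this tail be at most a per-neuron failure budget $\delta'$ yields a sufficient shot count $N = \lceil \frac{2}{\epsilon_H^2}\log\frac{2}{\delta'}\rceil$ for each neuron--datapoint pair.

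Next I would take a union bound over all $md$ pairs. Allocating failure budget $\delta' = \delta/(md)$ to each pair guarantees that the probability that \emph{any} of the $md$ estimates deviates by more than $\epsilon_H$ is at most $md\cdot\delta' = \delta$, so all outputs lie within $\epsilon_H$ of their true values simultaneously with probability at least $1-\delta$. Substituting $\delta'$ into the per-pair count gives $N = \mathcal{O}(\epsilon_H^{-2}\log(md/\delta))$ shots per pair, and multiplying by the $md$ pairs—each requiring its own measurement run, since differing data embeddings and measurement bases cannot in general be shared in this direct scheme—yields the claimed total $\mathcal{O}\!\left(\frac{md}{\epsilon_H^2}\log\frac{md}{\delta}\right)$.

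The main obstacle is not the concentration argument itself, which is routine, but justifying the two modeling assumptions that make it clean: that each shot is a bounded random variable (ensured by $\|O_b\| = \mathcal{O}(1)$, e.g.\ Pauli observables, so that the Hoeffding range is constant) and that the $md$ estimation tasks are carried out by separate, independent measurement runs so that both the independence of shots within each estimator and the union bound across estimators are valid. I would also flag, without proving it here, that sharing measurements across observables or invoking the classical shadows protocol discussed earlier can reduce this count substantially; the stated proposition deliberately reflects the naive sample-mean estimator where no such sharing is exploited.
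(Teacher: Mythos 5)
Your proposal is correct and follows essentially the same route as the paper's proof: Hoeffding's inequality for the bounded i.i.d.\ shot outcomes of each neuron--datapoint pair, followed by a union bound over the $md$ pairs with per-pair failure budget $\delta/(md)$, and multiplication of the resulting per-pair shot count by $md$. The additional remarks on boundedness via $\lVert O_b\rVert$ and on the independence of separate measurement runs are sensible clarifications but do not change the argument.
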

The proof is an application of Hoeffding~\citep{hoeffding1963probability} and union bound~\citep{boole2009mathematical} and is shown in Appendix \ref{appMeasure}.
Alternatively, one can reduce the number of total measurements by estimating the output of all quantum neurons that contain the same fixed Ansatz over a single data point by utilizing the classical shadows protocol.

\begin{proposition}[Measurements needed for shadow estimation of all quantum neurons]\label{proposition:shadow}
Consider a \texttt{(p, q)}-hybrid strategy as in Definition \ref{defHybrid}. Using the classical shadows to estimate output of all $m=pq$ quantum neuron over all $d$ data points, the number of quantum measurements required such that the all $m\times d$ outputs have an additive error of $\epsilon_H$ with a probability of $1-\delta$ falls within 
$$\mathcal O\left(\frac{pd}{\epsilon_H^2}\max_{k\in[q]}\|O_k\|^2_S\log\frac{md}{\delta}\right).$$
\end{proposition}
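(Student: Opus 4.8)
The plan is to exploit the defining feature of a \texttt{(p, q)}-hybrid strategy: each of the $m = pq$ quantum neurons measures one of only $q$ distinct observables $\{O_b\}_{b=1}^q$ on one of only $p$ distinct transformed states. Concretely, for a fixed Ansatz index $a \in [p]$ and data point $i \in [d]$, write $\sigma_{a,i} := U_a \rho(\boldsymbol{x_i}) U_a^\dagger$; then the $q$ neuron outputs $\{\tr(\mathfrak{O}_{(a,b)} \rho(\boldsymbol{x_i}))\}_{b=1}^q = \{\tr(O_b \sigma_{a,i})\}_{b=1}^q$ are all expectation values of the \emph{same} state $\sigma_{a,i}$ against the $q$ fixed observables. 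This is precisely the setting in which a single collection of classical shadows of $\sigma_{a,i}$ estimates all $q$ values at once, so rather than paying for $md = pqd$ independent estimations we need only $pd$ independent shadow collections, one per pair $(a, i)$.

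First, I would fix a single group $(a, i)$ and invoke the classical shadows guarantee recalled in the Preliminaries. To estimate the $q$ values $\{\tr(O_b \sigma_{a,i})\}_{b=1}^q$ each within additive error $\epsilon_H$ with failure probability at most $\delta_g$ for this group, the median-of-means estimator requires
$$ N_g \in \mathcal{O}\!\left(\frac{\log(q/\delta_g)}{\epsilon_H^2}\max_{k\in[q]}\|O_k\|_S^2\right) $$
classical shadows of $\sigma_{a,i}$, where the shadow norm $\|\cdot\|_S$ controls the sample complexity. Each shadow is obtained by preparing $\sigma_{a,i}$ and performing a single randomized measurement, so $N_g$ shadows cost $N_g$ measurements.

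Second, I would allocate the failure budget across groups by a union bound. There are $pd$ groups; setting $\delta_g = \delta/(pd)$ ensures that the probability that any group contains an estimate exceeding error $\epsilon_H$ is at most $pd \cdot \delta_g = \delta$, so that all $md$ outputs are simultaneously within $\epsilon_H$ with probability at least $1-\delta$. Substituting $\delta_g = \delta/(pd)$ and using $q\cdot pd = md$ collapses the logarithm to $\log(q\, pd/\delta) = \log(md/\delta)$, giving $N_g \in \mathcal{O}(\epsilon_H^{-2}\max_k\|O_k\|_S^2 \log(md/\delta))$ shadows per group. Multiplying the per-group cost by the $pd$ groups then yields the total measurement count
$$ pd \cdot N_g \in \mathcal{O}\!\left(\frac{pd}{\epsilon_H^2}\max_{k\in[q]}\|O_k\|_S^2 \log\frac{md}{\delta}\right), $$
as claimed.

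The only real subtlety—and hence the main thing to get right rather than a genuine obstacle—is the bookkeeping of the failure probability: the classical shadows bound already delivers a \emph{joint} guarantee over the $q$ observables sharing a state, so the logarithm inside a group scales with $q$ (not $qpd$), and it is only the union bound over the $pd$ distinct states that reintroduces the $\log(pd)$ factor, combining to $\log(md/\delta)$. Recognizing that the $q$-fold savings over the direct estimate of Proposition~\ref{proposition:measure} comes exactly from this shared-state structure, traded against the $\max_k\|O_k\|_S^2$ factor, is the conceptual heart of the argument.
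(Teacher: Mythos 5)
Your proposal is correct and follows essentially the same route as the paper's proof: apply the median-of-means classical-shadows guarantee once per (Ansatz, data point) pair to cover all $q$ shared-state observables jointly, then union-bound over the $pd$ groups so the logarithm becomes $\log(md/\delta)$, and multiply by the $pd$ repetitions. Your explicit allocation of the failure budget $\delta_g = \delta/(pd)$ is just a slightly more detailed bookkeeping of the same union-bound step the paper states directly.
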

The proof is via application of the bounds of the median-of-means estimator~\citep{nemirovsky1983problem} and union bound~\citep{boole2009mathematical} and is shown in Appendix \ref{appMeasure}. Relating back to the design principles, we note that the classical shadows protocol does not help with the Ansatz expansion approach as $q=1$ and $m=p$, and instead increases the complexity by $\|O\|_S^2$. 

\subsection{Error propagation through neural networks} 
Recall that we define $Q$ such that $Q_{ij} = \tr(\mathfrak{O}_j \rho(\boldsymbol{x_i}))$ from Equation~\ref{eq:q_def}. Given that each element in $Q$ is the expected outcome of a quantum neuron, we construct $\hat{Q}$ such that $\hat{Q}_{ij}$ is the estimated outcome of the quantum neuron that falls within an additive error of $\epsilon_H$ by multiple direct measurements in Proposition~\ref{proposition:measure} or by utilizing classical shadow estimations in Proposition~\ref{proposition:shadow}. Formally, $\forall i, j\quad |\hat{Q}_{ij} - Q_{ij}| \le \epsilon_H$. 
We now consider the effect of such errors on the loss function $\mathcal{L}$ that we use for machine learning problems.

For linear regression, we consider the  RMSE loss of $\mathcal{L}_{\mathrm{RMSE}}(\boldsymbol{\alpha},Q)$ from Equation~\ref{eq:rmse}.
We then define the following optimal parameters $\boldsymbol{\alpha^*}$ and $\boldsymbol{\hat{\alpha}^*}$, for $Q$ and $\hat{Q}$, respectively, where
\begin{align}
\boldsymbol{\alpha^*} &:= {\argmin}_{\boldsymbol{\alpha}} \mathcal{L}_{\mathrm{RMSE}}(\boldsymbol{\alpha},Q),\label{eq:alpha}\\
\boldsymbol{\hat{\alpha}^*} &:= {\argmin}_{\boldsymbol{\hat{\alpha}}} \mathcal{L}_{\mathrm{RMSE}}(\boldsymbol{\hat \alpha},\label{eq:hatalpha}\hat{Q}).
\end{align}
Further, we define using only $Q$,
\begin{equation}
\Delta \mathcal{L}_{\mathrm{RMSE}} := \mathcal{L}_{\mathrm{RMSE}}(\boldsymbol{\hat{\alpha}^*},Q) - \mathcal{L}_{\mathrm{RMSE}}(\boldsymbol{\alpha^*},Q).\label{eq:rmse_diff}
\end{equation}
We then obtain the following theorem, the proof of which can be found in Appendix~\ref{appError}.
\begin{theorem}[Linear regression theoretical guarantee]\label{theorem:linear}
Consider a \texttt{(p, q)}-hybrid strategy as in Definition \ref{defHybrid}, and let $m=pq$. Let $Q$ be the matrix as defined in Equation~\ref{eq:q_def}. Let $\epsilon >0$ and let $\hat{Q}$ be such that
\begin{multline*}
\|\hat{Q}-Q\|_{\max} < \min \bigg(\frac{\min(\sigma_{\min}(Q), \sigma_{\min}(\hat{Q}))}{\sqrt{\min(m, d) md}},\\
\frac{\epsilon }{6 \sqrt{m}\|\boldsymbol{Y}\|_2 \|Q\|\|Q^+\|^2 } \bigg). \end{multline*}
Then, for the loss difference as defined in Equation~\ref{eq:rmse_diff}, $\Delta\mathcal{L}_{\mathrm{RMSE}} < \epsilon$.
\end{theorem}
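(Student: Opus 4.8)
The plan is to reduce the statement to a perturbation bound for the Moore--Penrose pseudoinverse, isolating $\|Q^+-\hat{Q}^+\|$ as the only nontrivial quantity. Since the RMSE is minimized by the minimum-norm least-squares solution, I would first write $\boldsymbol{\alpha^*} = Q^+\boldsymbol{Y}$ and $\boldsymbol{\hat{\alpha}^*} = \hat{Q}^+\boldsymbol{Y}$, and stress that $\Delta\mathcal{L}_{\mathrm{RMSE}}$ compares \emph{both} solutions on the same clean matrix $Q$, so that $Q\boldsymbol{\alpha^*}=QQ^+\boldsymbol{Y}$ and $Q\boldsymbol{\hat{\alpha}^*}=Q\hat{Q}^+\boldsymbol{Y}$. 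Applying the reverse triangle inequality to the two Euclidean norms defining $\Delta\mathcal{L}_{\mathrm{RMSE}}$ and then submultiplicativity yields
\[
\Delta\mathcal{L}_{\mathrm{RMSE}} \le \frac{1}{\sqrt{d}}\, \| Q(Q^+-\hat{Q}^+)\boldsymbol{Y}\|_2 \le \frac{1}{\sqrt{d}}\,\|Q\|\,\|Q^+-\hat{Q}^+\|\,\|\boldsymbol{Y}\|_2 .
\]

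Next I would use Weyl's inequality to pass from the entrywise hypothesis to the spectral norm. Writing $E=\hat{Q}-Q$ and converting via $\|E\|\le\|E\|_F\le\sqrt{dm}\,\|E\|_{\max}$, the first term of the $\min$ forces $\|E\|<\min(\sigma_{\min}(Q),\sigma_{\min}(\hat{Q}))/\sqrt{\min(m,d)}\le\min(\sigma_{\min}(Q),\sigma_{\min}(\hat{Q}))$. Applying Weyl to $\hat{Q}=Q+E$ shows $\rank(\hat{Q})\ge\rank(Q)$, and symmetrically to $Q=\hat{Q}-E$ shows the reverse, so $\rank(\hat{Q})=\rank(Q)$; this rank-preserving (acute) condition is exactly what makes $Q\mapsto Q^+$ continuous in $E$. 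The same Weyl estimate gives $\sigma_{\min}(\hat{Q})\ge\sigma_{\min}(Q)-\|E\|$, which (since $\|E\|$ is now a small fraction of $\sigma_{\min}(Q)$) bounds $\|\hat{Q}^+\|=1/\sigma_{\min}(\hat{Q})$ by a constant multiple of $\|Q^+\|$.

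With rank preservation established I would invoke the standard pseudoinverse perturbation theorem, $\|\hat{Q}^+-Q^+\|\le \mu\max(\|Q^+\|^2,\|\hat{Q}^+\|^2)\,\|E\|$ for a modest constant $\mu$. Substituting the bound $\|\hat{Q}^+\|\le c\|Q^+\|$ from the previous step collapses the maximum into $\|Q^+\|^2$, and re-expanding $\|E\|\le\sqrt{dm}\,\|E\|_{\max}$ lets the leading $1/\sqrt{d}$ combine with $\sqrt{dm}$ into a single $\sqrt{m}$. Collecting the constants (which is where the prefactor $6$ originates) gives $\Delta\mathcal{L}_{\mathrm{RMSE}} \le 6\sqrt{m}\,\|\boldsymbol{Y}\|_2\,\|Q\|\,\|Q^+\|^2\,\|\hat{Q}-Q\|_{\max}$, so the second term of the $\min$ in the hypothesis makes the right-hand side strictly less than $\epsilon$.

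The main obstacle is the pseudoinverse perturbation step. Unlike the ordinary inverse, $Q\mapsto Q^+$ is discontinuous precisely at rank changes, so \emph{no} bound of the form $\|Q^+-\hat{Q}^+\|\lesssim\|E\|$ can hold without first certifying $\rank(\hat{Q})=\rank(Q)$; this is the entire reason the first, seemingly redundant, term appears in the hypothesis. Making the Weyl argument and the $\max$-to-spectral norm conversions interlock cleanly so that the two hypotheses play their separate roles (rank preservation versus the quantitative estimate) and the final constant lands at exactly $6$ is the part requiring the most bookkeeping care.
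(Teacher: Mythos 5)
Your proposal is correct and follows essentially the same route as the paper's proof: the reverse triangle inequality plus submultiplicativity to reduce $\Delta\mathcal{L}_{\mathrm{RMSE}}$ to $\frac{1}{\sqrt d}\|Q\|\,\|\boldsymbol{Y}\|_2\,\|\hat Q^+-Q^+\|$ with $\boldsymbol{\alpha^*}=Q^+\boldsymbol{Y}$ and $\boldsymbol{\hat\alpha^*}=\hat Q^+\boldsymbol{Y}$, a rank-preservation certificate extracted from the first term of the $\min$, a Wedin-type pseudoinverse perturbation bound, and the conversion $\|\hat Q-Q\|\le\sqrt{md}\,\|\hat Q-Q\|_{\max}$ to land on the factor $6\sqrt{m}$. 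The only substantive difference is in the sub-steps: you certify $\rank(Q)=\rank(\hat Q)$ by applying Weyl's singular-value inequality symmetrically to $\hat Q=Q+E$ and $Q=\hat Q-E$ (a more elementary argument than the paper's Lemma~\ref{lemmaPerturbationRank}, which goes through a perturbation inequality for concave functions of singular values and the nuclear norm, and which explains why both $\sigma_{\min}(Q)$ and $\sigma_{\min}(\hat Q)$ appear in the hypothesis), and you control $\|\hat Q^+\|$ directly via $\sigma_{\min}(\hat Q)\ge\sigma_{\min}(Q)-\|E\|$, whereas the paper instead solves the self-referential inequality $\|\hat Q^+\|\le\|Q^+\|+\|\hat Q^+-Q^+\|$ inside Wedin's bound, using $\|Q^+\|\|\hat Q-Q\|\le 1/3$ (implicitly assuming $m,d\ge 9$, an assumption your ``small fraction of $\sigma_{\min}(Q)$'' step likewise quietly requires) to arrive at the constant $6$.
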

To find the number of measurements needed to obtain a loss difference within $\epsilon$, we first set the provided upper bounds to be within $\epsilon_H$ as found in Propositions~\ref{proposition:measure} and \ref{proposition:shadow}. Assuming a scenario when $\|Q\|\|Q^+\| =\frac{\|Q\|}{\sigma_{\min}(Q)} = \kappa_Q \in O(1)$, $\|\boldsymbol{Y}\|_2 \in O(\sqrt{d})$, $\|Q\| \in \Omega(\sqrt{d})$, we upper bound $\epsilon_H^{-1}$ with respect to $m$, $d$ and $\epsilon$ as follows:
\begin{align}
&\epsilon_H \ge \min\left({\tfrac{\min(\sigma_{\min}(Q), \sigma_{\min}(\hat{Q}))}{\sqrt{\min(m, d) md}}, \tfrac{\epsilon }{6 \sqrt{m}\|\boldsymbol{Y}\|_2 \|Q\|\|Q^+\|^2}}\right)\\
\Rightarrow \, & \frac{1}{\epsilon_H} \le \max\left({\tfrac{\sqrt{\min(m, d) md}}{\min(\sigma_{\min}(Q), \sigma_{\min}(\hat{Q}))}, \tfrac{6 \sqrt{m}\|\boldsymbol{Y}\|_2 \|Q\|\|Q^+\|^2}{\epsilon}}\right)\\
\Rightarrow \, & \frac{1}{\epsilon_H} \in \max\left(\mathcal O(m), \mathcal O\left(\frac{\sqrt{m}}{\epsilon}\right)\right) \in \mathcal O\left(\frac{m}{\epsilon}\right)
\end{align}

Combining the results in Propositions~\ref{proposition:measure} and \ref{proposition:shadow}, if we want the final loss term of our hybrid quantum-classical model to have an error within $\epsilon$, then the total number of measurements needed falls within the complexity of 
\begin{equation}
    t \in \mathcal O\left(\frac{m^3d}{\epsilon^2}\log \frac{md}{\delta}\right)
 \end{equation}
for the probability of $1-\delta$ using direct measurements, and 
\begin{equation}
    t \in \mathcal O\left(\frac{m^2pd\max_{k\in[q]}\|O_k\|^2_{S}}{\epsilon^2}\log \frac{md}{\delta}\right)
\end{equation}
using the classical shadows protocol.

\begin{table*}
\centering
\caption{Upper bounds for the number of measurements for the different design principles of post-variational strategies combined with linear regression given $\ell_2$ constraints following Theorem~\ref{theorem:constrainedLinear}. The better approach between direct measurement and classical shadows for different design principles is bolded. Given that the Ansatz expansion strategy does not employ multiple observables, the classical shadows method does not provide a speedup, and direct measurement should be used. For the observation construction and hybrid strategy, classical shadows provide a speedup only when the measurement observables are local.}
\label{tab:strategy_bound}
\begin{ruledtabular}
\begin{tabular}{lcc}
$p$ Ans\"atze, $q$ observables, $n$ qubits, $d$ data points&   Direct measurement & Classical shadows\\
\colrule
Ansatz expansion ($q=1$) & $\displaystyle \boldsymbol{\mathcal O\left(\frac{p^2d}{\epsilon^2}\log{\frac{pd}{\delta}}\right)}$ & $\displaystyle \mathcal O\left(\frac{p^2d\|O\|_S^2}{\epsilon^2}\log\frac{pd}{\delta}\right)$\\
Observable construction ($p=1$) & $\displaystyle \mathcal O\left(\frac{q^2d}{\epsilon^2}\log{\frac{qd}{\delta}}\right)$ & $\displaystyle \mathcal O\left(\frac{qd\max_{k\in[q]}\|O\|_S^2}{\epsilon^2}\log{\frac{qd}{\delta}}\right)$\\
Hybrid & $\displaystyle \mathcal O\left(\frac{m^2d}{\epsilon^2}\log{\frac{md}{\delta}}\right)$ & $\displaystyle \mathcal O\left(\frac{mpd\max_{k\in[q]}\|O\|_S^2}{\epsilon^2}\log{\frac{md}{\delta}}\right)$\\
$L$-local Hybrid ($q\in \mathcal O(3^Ln^L)$) & $\displaystyle \mathcal O\left(\frac{9^L n^{2L} Lpd}{\epsilon^2}\log {\frac{npd}{\delta}}\right)$ & $\boldsymbol {\displaystyle \mathcal O\left(\frac{9^L n^{L} Lpd}{\epsilon^2}\log{\frac{npd}{\delta}}\right)}$\\
\end{tabular}
\end{ruledtabular}
\end{table*}

The closed-form solution of linear regression returns the maximum likeihood estimation of the classical parameters $\boldsymbol{\alpha}$, and can be sensitive to noise, resulting in the higher number of shots. To allow for more robustness for error in the estimation, one can employ constrained optimization or a penalty term. 
In the first method, we apply a $\ell_2$ constraint on the parameters such that $\|\boldsymbol{\alpha}\|_2 \le 1$ to the program in Eq.~\ref{eq:alpha} and \ref{eq:hatalpha} and solve with usual convex optimization solvers such as interior point methods~\citep{boyd2004convex}. 
For the second method, we can simply apply Tikhonov regularization penalty with an appropriate ridge parameter $\lambda(\boldsymbol{\alpha})$ to achieve $\|\boldsymbol{\alpha}\|_2 \le 1$. From the Bayesian perspective, we find the maximum a posteriori estimation with a Gaussian prior of variance $(2\lambda(\boldsymbol{\alpha}))^{-1}$ on $\boldsymbol{\alpha}$ to the least squares problem.  For both methods, we can in principle achieve $\|\boldsymbol{\alpha}\|_2 \le 1$, which is an assumption for the next theorem.

\begin{theorem}[Constrained linear regression theoretical guarantee]\label{theorem:constrainedLinear}
Consider a \texttt{(p, q)}-hybrid strategy as in Definition \ref{defHybrid}, and let $m=pq$. Let $Q$ be the matrix as defined in Equation~\ref{eq:q_def}. Let $\epsilon >0$ and let $\hat{Q}$ be such that
\begin{equation*}
\|\hat{Q}-Q\|_{\max} < \frac{\epsilon}{2\sqrt{m}} 
\end{equation*}
Then, for the loss difference as defined in Equation~\ref{eq:rmse_diff}, given the additional constraint that $\|\boldsymbol{\alpha^*}\|_2 \le 1$ and $\|\boldsymbol{\hat \alpha^*}\|_2 \le 1$, has $\Delta\mathcal{L}_{\mathrm{RMSE}} < \epsilon$.
\end{theorem}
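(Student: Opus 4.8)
The plan is to reduce the statement to a uniform comparison between the two loss functions $\mathcal{L}_{\mathrm{RMSE}}(\cdot, Q)$ and $\mathcal{L}_{\mathrm{RMSE}}(\cdot, \hat{Q})$ over the feasible set $\{\boldsymbol{\alpha} : \|\boldsymbol{\alpha}\|_2 \le 1\}$, and then to invoke the optimality of $\boldsymbol{\hat\alpha^*}$ for $\hat{Q}$ in a short chaining argument. The crucial simplification relative to Theorem~\ref{theorem:linear} is that the norm constraint lets me bound everything without reference to $\sigma_{\min}$ or the conditioning of $Q$.

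First I would establish a \emph{uniform Lipschitz bound}: for every $\boldsymbol{\alpha}$ with $\|\boldsymbol{\alpha}\|_2 \le 1$,
\begin{equation*}
\left| \mathcal{L}_{\mathrm{RMSE}}(\boldsymbol{\alpha}, Q) - \mathcal{L}_{\mathrm{RMSE}}(\boldsymbol{\alpha}, \hat{Q}) \right| < \frac{\epsilon}{2}.
\end{equation*}
This follows by writing the two losses as $\frac{1}{\sqrt d}\|\boldsymbol{Y} - Q\boldsymbol{\alpha}\|_2$ and $\frac{1}{\sqrt d}\|\boldsymbol{Y} - \hat{Q}\boldsymbol{\alpha}\|_2$, applying the reverse triangle inequality to obtain the upper bound $\frac{1}{\sqrt d}\|(Q - \hat{Q})\boldsymbol{\alpha}\|_2$, and then converting the max-norm hypothesis into an $\ell_2$ bound. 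Writing $E := Q - \hat{Q}$, a row-wise Cauchy--Schwarz step gives $|(E\boldsymbol{\alpha})_i| \le \sqrt{m}\,\|E\|_{\max}\|\boldsymbol{\alpha}\|_2$, so that $\|E\boldsymbol{\alpha}\|_2 \le \sqrt{dm}\,\|E\|_{\max}$. The decisive point is that the leading $\frac{1}{\sqrt d}$ cancels the $\sqrt d$, leaving exactly $\sqrt{m}\,\|E\|_{\max}$, which the hypothesis $\|\hat{Q}-Q\|_{\max} < \frac{\epsilon}{2\sqrt m}$ drives below $\frac{\epsilon}{2}$.

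With the uniform bound in hand, I would close the argument by the standard perturbed-optimization chaining. Both $\boldsymbol{\alpha^*}$ and $\boldsymbol{\hat\alpha^*}$ lie in the feasible set by the assumptions $\|\boldsymbol{\alpha^*}\|_2\le 1$ and $\|\boldsymbol{\hat\alpha^*}\|_2\le 1$, so the uniform bound applies at both points. Then
\begin{equation*}
\mathcal{L}_{\mathrm{RMSE}}(\boldsymbol{\hat\alpha^*}, Q) < \mathcal{L}_{\mathrm{RMSE}}(\boldsymbol{\hat\alpha^*}, \hat{Q}) + \tfrac{\epsilon}{2} \le \mathcal{L}_{\mathrm{RMSE}}(\boldsymbol{\alpha^*}, \hat{Q}) + \tfrac{\epsilon}{2} < \mathcal{L}_{\mathrm{RMSE}}(\boldsymbol{\alpha^*}, Q) + \epsilon,
\end{equation*}
where the middle inequality is the optimality of $\boldsymbol{\hat\alpha^*}$ for $\hat{Q}$ tested against the feasible competitor $\boldsymbol{\alpha^*}$, and the two outer inequalities are applications of the uniform bound. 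Subtracting $\mathcal{L}_{\mathrm{RMSE}}(\boldsymbol{\alpha^*}, Q)$ then yields $\Delta\mathcal{L}_{\mathrm{RMSE}} < \epsilon$.

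I expect the only real subtlety to be the norm bookkeeping in the first step: verifying that the $\sqrt{dm}$ factor from the max-to-$\ell_2$ conversion is precisely cancelled by the $\frac{1}{\sqrt d}$ RMSE normalization together with the feasibility constraint $\|\boldsymbol{\alpha}\|_2\le 1$, so that no dependence on $d$, on $\sigma_{\min}(Q)$, or on the condition number $\|Q\|\|Q^+\|$ survives. This is exactly what the $\ell_2$ constraint buys over the unconstrained Theorem~\ref{theorem:linear}; the chaining step itself is routine once the uniform bound is in place.
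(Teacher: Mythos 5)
Your proposal is correct and follows essentially the same route as the paper's proof: the same add-and-subtract chaining through $\mathcal{L}_{\mathrm{RMSE}}(\boldsymbol{\hat\alpha^*},\hat Q)$ using the optimality of $\boldsymbol{\hat\alpha^*}$ for $\hat Q$, combined with a reverse-triangle-inequality perturbation bound of $\sqrt{m}\,\|\hat Q - Q\|_{\max}$ at each of the two feasible points. The only cosmetic difference is that you convert the max norm to an $\ell_2$ bound via row-wise Cauchy--Schwarz, whereas the paper passes through the spectral and Frobenius norms; both give the identical $\sqrt{dm}$ factor.
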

This result can be further extended to logistic regression problems with the same $\ell_2$ constraint minimized also on BCE loss given that the sigmoid function is 1-Lipschitz. The proof of the above proposition and extended results can be found in Appendix~\ref{appError}.

For the above results, we see that by applying an $\ell_2$ on $\boldsymbol\alpha$ of the classical regression part of the algorithm, we see that the number of measurements required drops to 
\begin{equation}
t\in\mathcal{O}\left(\frac{m^2d}{\epsilon^2}\log\frac{md}{\delta}\right),
\end{equation}
allowing for more robustness in errors in quantum measurement. A full list of upper bounds for different design principles of post-variational strategies is shown in Table~\ref{tab:strategy_bound}.

\section{Implementation and empirical results}
\label{ch:experiment}
In this section, we discuss the implementation details of our post-variational quantum neural network in the setting of machine learning applications. 

\subsection{Problem setup and implementation details}

To demonstrate applicability on real-world data, we train our post-variational quantum neural network on classification tasks using the Fashion-MNIST dataset~\citep{xiao2017fashion}, which serves as a common benchmark in various quantum machine learning papers~\citep{huang2021power, jerbi2023quantum, alam2021quantum, kerenidis2020quantum, chen2021end}. The Fashion-MNIST dataset consists of 70,000 grayscale images of size $28 \times 28$ of 10 different classes of fashion items and is meant to be a replacement of the MNIST dataset~\citep{lecun1998mnist}. 

As encoding the full image into the quantum circuit may be difficult, we first reduce the dimensions of the image to $4\times 4$ images. As opposed to performing PCA for dimension reduction~\citep{huang2021power, jerbi2023quantum, chen2021end}, we instead apply max pooling over $7\times 7$ patches and rescaling the parameters to a range of $[0, 2\pi)$. We then encode each column into a single qubit by iterating between $RZ$ and $RX$ gates similar to \citep{alam2021quantum} as shown in Figure~\ref{fig:data_encode}. The reasoning of using this particular method is that Fashion-MNIST is still a relatively easy dataset and the usage of PCA may perform enough substantial heavylifting such that the task becomes less non-trivial.

For the Ansatz used in the variational baseline and the backbone for the Ansatz expansion and hybrid strategies for our post-variational methods, we use a simple Ansatz made of 2 alternations of $RY$ gates and circular $CNOT$ gates as shown in Figure~\ref{fig:ansatz}. We set initial parameters to 0, on which the Ansatz would evaluate to identity. Such Ansatz designs and parameter initialization have been 
shown to be able to avoid barren plateaus in variational algorithms~\citep{grant2019initialization}.

To train our post-variational quantum neural networks, we generate post-variational feature embeddings by Algorithm~\ref{algo:main} via \texttt{qiskit}~\citep{research2017qiskit}. For the classical regression layer, we use the logistic regression algorithm as provided by the \texttt{scikit-learn} library~\citep{pedregosa2011scikit}.

\begin{figure}
\centering
\includegraphics[width=0.95\linewidth]{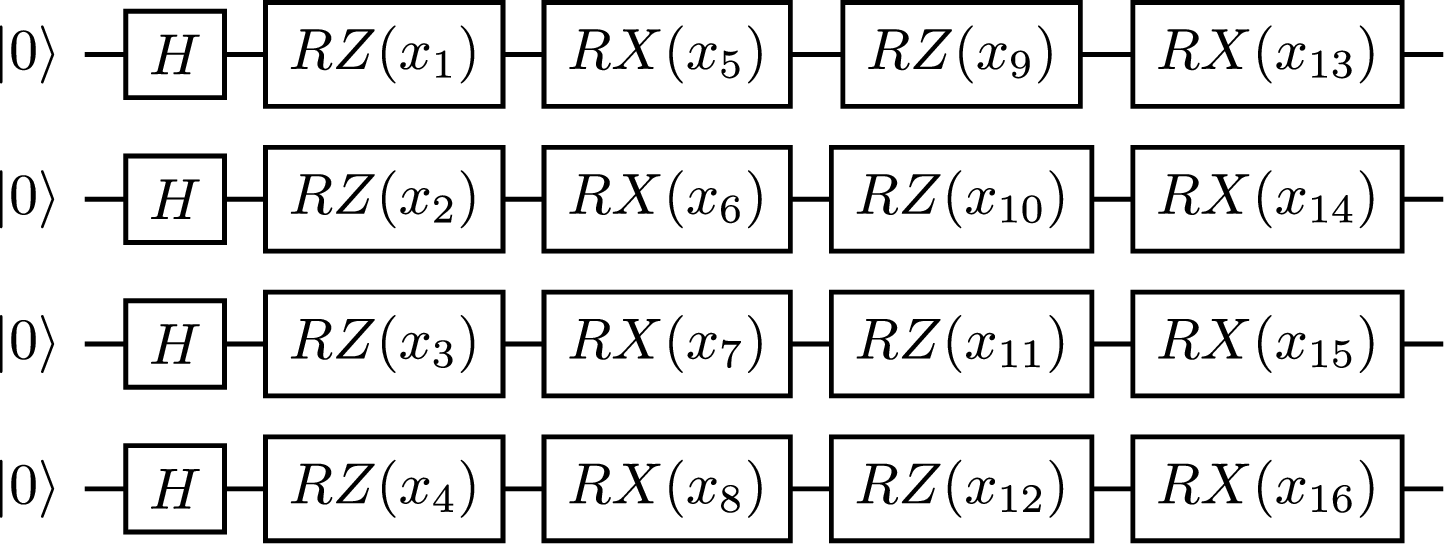}
\caption{Data encoding circuit. Each column of the compressed image is encoded into a single qubit, and each row is encoded consecutively via alternating rotation-Z and rotation-X gates.}
\label{fig:data_encode}
\end{figure}

\begin{figure}
\centering
\includegraphics[width=0.95\linewidth]{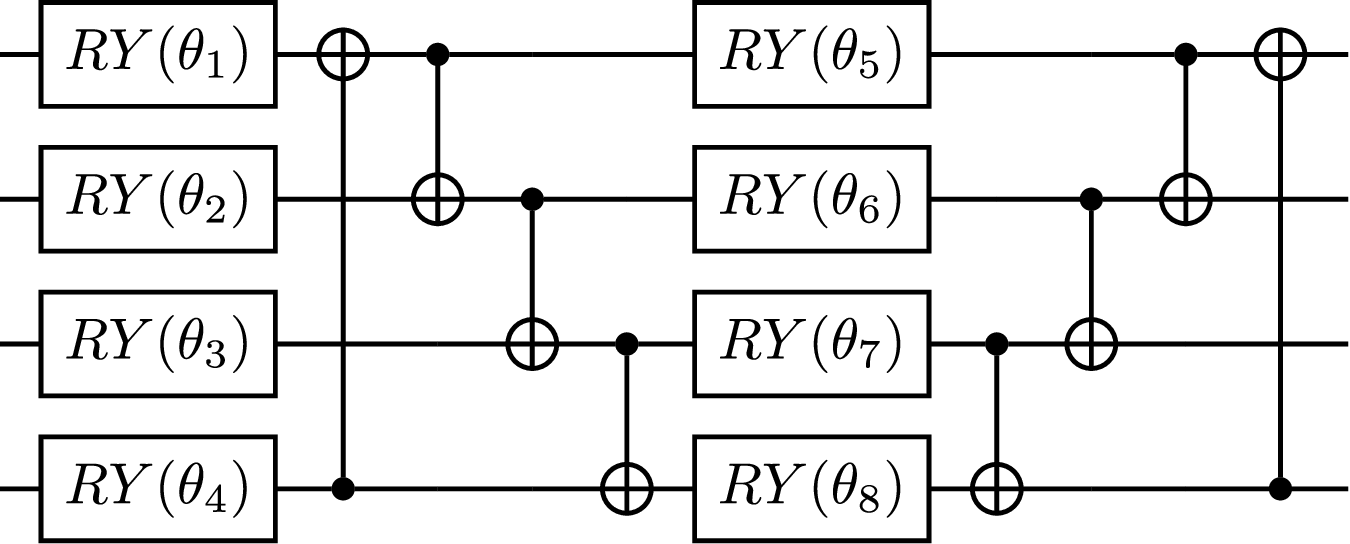}
\caption{Ansatz used in variational algorithm. This Ansatz is also used as the Ans\"atze generating backbone for the Ansatz expansion and hybrid post-variational strategies. We set all initial parameters to 0, by which the Ansatz evaluates to identity.}
\label{fig:ansatz}
\end{figure}

\begin{algorithm}
\caption{Hybrid PV feature generation}
\label{algo:main}
\DontPrintSemicolon

\KwIn{qubit number $n\in\mathbb{N}$, locality $L \in\mathbb{N}$,
derivative order $R\in\mathbb{N}$, dataset $\{x_i\}_{i\in[d]}$, data encoding unitary $S(\cdot)$, Ansatz $U(\cdot )$}
\KwOut{Post-variational embedding $Q \in \mathcal M_{m\times d}(\mathbb R)$}

$shifts \gets \text{ListOfShiftedParamsInDerivativeOrder}(R)$\;
$observables \gets \text{ListOfPaulisInLocality}(L)$\;
\For {$i\in[d]$} {
    \For {$(j, \boldsymbol{\theta}, O) \in \texttt{enum}(shifts \times observables)$}{
        $Q_{ij} =\langle 0^n |S^\dagger(\boldsymbol{x_i}) U^\dagger(\boldsymbol{\theta})OU(\boldsymbol{\theta}) S(\boldsymbol{x_i})| 0^n\rangle$\;
    }
}
\Return $Q$
\end{algorithm}
\subsection{Experimental results}

We experiment on the effectiveness of post-variational quantum neural networks on a simplified version of the FashionMNIST classification task, where we conduct binary classification of the classes \texttt{coat} and \texttt{shirt}, training on 200 samples and testing on 50 samples from each class. Experimental results are shown in Table~\ref{tab:main}. We do not include the loss of variational algorithms as they use the variational Hamiltonian loss function as opposed to the binary cross-entropy loss of classical and post-variational quantum neural networks.

Our results show that all post-variational methods exceed the variational algorithm while using the same Ansatz for the Ansatz expansion and hybrid strategies. From comparisons with the classical linear estimator based logistic regression models, we note that including 2-local observables or 1-local observables with 2-order derivatives, we can exceed the performance of classical linear models. With 3-local observables or 2-local observables with 1-order derivatives, the post-variational neural networks can even exceed the performance of a two-layer multilayer perceptron in train accuracy. However, given that the post-variational algorithms extracts more features than the classical algorithm, there are more parameters to optimize, leading to overfitting on the training model to a certain extent, as shown by the decreasing testing accuracy of these models.

From the performance results of these results, we can obtain a glimpse of the effectiveness of each strategy. While the Ansatz expansion strategy does not perform much better even when we take up to second-order derivatives, the inclusion of second-order derivatives provide a large boost in accuracy when we use derivatives in conjunction with all 1-local observables in the hybrid strategy. Further, the addition of first-order derivatives also provide a boost in accuracy when added to 2-local observables. This implies that the addition of gradient circuits should serve mostly as an heuristic to expand the expressibility to observable construction methods and may not be sufficient in itself as a good training strategy, as it is limited by the expressibility of the structure of the Ansatz.

\begin{table}
\caption{Effectiveness of different design principles of post-variational quantum neural networks. We note that post-variational methods can outperform variational methods, and can even exceed small scale multilayer perceptrons in training accuracy.}
\label{tab:main}
\begin{ruledtabular}
\begin{tabular}{llcccc}
            &                   & \multicolumn{2}{c}{Train} & \multicolumn{2}{c}{Test} \\
            &                   & Loss        & Acc.        & Loss       & Acc.       \\
\colrule
Classical   & Logistic          & 0.5379      & 69.25\%     & 0.5913     & 65.33\%     \\
            & MLP               & 0.4457      & 77.92\%     & 0.7176     & 67.67\%     \\
\colrule
Variational &                   & -           & 55.83\%     & -          & 50.67\%     \\
\colrule
Ansatz      & 1-order           & 0.6849      & 56.08\%     & 0.6996     & 55.00\%     \\
            & 2-order           & 0.6593      & 57.75\%     & 0.7078     & 53.67\%     \\
\colrule
Observable  & 1-local           & 0.6228      & 65.42\%     & 0.6630     & 60.00\%     \\
            & 2-local           & 0.5441      & 72.42\%     & 0.7313     & 58.67\%     \\
            & 3-local           & 0.4610      & 78.67\%     & 0.7482     & 59.67\%     \\
\colrule
Hybrid      & 1-order + 1-local & 0.5912      & 67.33\%     & 0.6977     & 61.67\%     \\
            & 2-order + 1-local & 0.4971      & 75.42\%     & 0.8017     & 55.67\%     \\
            & 1-order + 2-local & 0.4337      & 78.00\%     & 0.8881     & 57.67\%    
     
\end{tabular}
\end{ruledtabular}
\end{table}

\begin{table}
\caption{Training results of multiclass classification.}
\label{tab:multi}
\begin{ruledtabular}
\begin{tabular}{lcccc}
         & Logistic & MLP    & Variational & 1-order+2-local PV \\
\colrule
Loss     & 0.8246   & 0.4865 & -           & 0.6786                    \\
Accuracy & 0.6725   & 0.815  & 0.1675      & 0.825                    
\end{tabular}
\end{ruledtabular}
\end{table}

Finally, we mention the capacities of such models in the training of multiclass data. Multiclass training in variational algorithms involves a variety of different strategies, including partitioning the output of measured bitstrings~\citep{bokhan2022multiclass}, conducting measurements on different qubits~\citep{chen2021end}, training different variational algorithms in an one-versus-all setting~\citep{schuld2020circuit}, or attaching a classical neural network to the measurement outcome of the qubits of the variational algorithms~\citep{alam2021quantum}. Post-variational quantum neural networks, on the other hand, can be extended to multiclass problems, being simply adding an additional dimension to the classical linear map, as it supplies a larger number of feature extractors, and leans into the power of finding different combinations of quantum feature maps, making them a natural candidate for multiclass tasks. Our experimental results on training 400 evenly sampled classes for multiclass classification show that post-variational quantum neural network can obtain comparable performance to that of multi-layer perceptrons in such types of training as well (Table~\ref{tab:multi}).

\section{Discussion}

In this work, we introduce post-variational quantum neural networks as an alternative and non-iterative method for quantum machine learning in the NISQ era~\citep{preskill2018quantum}. In contrast to variational algorithms, post-variational quantum neural networks conduct measurements of fixed quantum circuits and find the optimal classical combination of such circuits to form a machine learning model, rather than directly optimizing over variational parameters on quantum computers.

Given a well-selected set of good fixed Ans\"atze, the post-variational method involves only convex optimization, and hence can provide a guarantee of finding a global minimum solution in polynomial time in regards to the selected set of Ans\"atze. We emphasize again that while this property of post-variational methods provides a terminable algorithm and optimal solution conditioned on the set of Ans\"atze given, we do not claim to resolve barren plateau problems or related exponential concentration stemming from the exponentially large Hilbert space. The hardness of the problem is instead delegated to the selection of the set of fixed Ans\"atze from an exponential amount of possible quantum circuits, which we attempt to mitigate using our three heuristical strategies -- Ansatz expansion, observable construction, and a hybrid method of the former two. The effectiveness of these methods can then be seen empirically in our experimental results in Section~\ref{ch:experiment}.

While we have an algorithm that terminates, we do not exhibit a quantum advantage that is proven. We retain the data embedding and the fixed part of the parameterized circuits. Hence, we still retain the classical hardness of simulating these circuits~\citep{bremner2016average,bremner2017achieving,havlivcek2019supervised} and the potential, if any, of quantum advantages for certain data sets and problems.

The implementation and limitations of neural networks on quantum computers has been extensively studied in both fault-tolerant~\citep{altaisky2001quantum, schuld2014quest, wiebe2015quantum, schuld2015simulating, kapoor2016quantum, amin2018quantum, rebentrost2018quantum, allcock2020quantum, kerenidis2020quantum, guo2024quantum} and variational settings on NISQ hardware~\citep{cerezo2021variational, mitarai2018quantum, schuld2020circuit, farhi2018classification, benedetti2019parameterized, schuld2019quantum, cong2019quantum}. 
In comparison to these prior works, we now provide some arguments that post-variational quantum neural networks may have fewer quantum gates per circuit, as well as potentially lower circuit depth.
For fault-tolerant implementations of neural networks on quantum hardware, a construction of classical neural networks with potential quantum speedups can be achieved by techniques such as Grover's algorithm~\citep{grover1996fast}, amplitude amplification~\citep{brassard2002quantum}, quantum signal processing~\citep{low2017optimal, low2019hamiltonian}, quantum singular value transformation~\citep{gilyen2019quantum}, and/or nonlinear amplitude transformations~\citep{guo2021nonlinear,rattew2023nonlinear}. While with these techniques one is able to argue provable advantages, see, e.g., \citep{allcock2020quantum}, the setting of these algorithms requires deeper and more complex quantum circuits than that of quantum algorithms suitable for the NISQ era. 

On the other hand, while comparing to variational algorithms, we note that by using our heuristic strategies, we can also potentially lower the number of quantum gates per quantum circuit. By replacing part of the Ansatz with an ensemble of local Pauli measurements as with our observable construction method, one reduces the depth of the circuit. Using the Ansatz expansion strategy results in fixed circuits. These fixed circuits we can optimize with transpilation and circuit optimization strategies. Often the our initial circuit has the parameters to set to zero, and we can remove gates that evaluate to identity. Obvious examples would be single-qubit rotational gates or two CNOT gates that cancel each other out, leading to fewer gates per circuit, and potentially lower circuit depth.

While our empirical results show that there are cases where the usage of post-variational quantum neural networks surpass the performance of variational algorithm, we do not make a statement on the superiority of variational and post-variational algorithms as different problem settings may lead to different algorithms outperforming the other. We propose post-variational quantum neural networks simply as an alternative implementation of neural networks in the NISQ setting, and leave the determination of case-by-case distinctions on performance evaluations and resource consumption to future work.

\begin{acknowledgments}
The authors would like to thank Naixu Guo, Xiufan Li and Beng Yee Gan for valuable discussions. The authors would also like to thank Rahul Jain and Warut Suksompong for the constructive criticism of an early draft of this manuscript, and Ying Yang for help proofreading the original preprint. This research is supported by the National Research Foundation, Singapore, and A*STAR under its CQT Bridging Grant and its Quantum Engineering Programme under grant NRF2021-QEP2-02-P05.
\end{acknowledgments}

\onecolumngrid
\appendix

\section{Deconstruction of a variational Ansatz}
\label{appendix:decompose}
Here we provide a simple argument to deconstruct a variational Ansatz and show that given enough terms, the post-variational algorithm can retrieve the optimum obtained by the variational Ansatz with a terminable guarantee.

As quantum circuits are specified by a sequential arrangement of quantum logic gates, the variational Ansatz $U(\boldsymbol\theta)$ of a variational quantum circuit that is parameterized by $\boldsymbol{\theta} \in \mathbb{R}^s$ can be expressed as a product of unitaries such that 
\begin{equation}
U(\boldsymbol\theta) = U_1(\theta_1)U_2(\theta_2)\cdots U_s(\theta_s).
\end{equation}
For the sake of simplicity, we discuss only one-parameter quantum logic gates for our analysis of the variational algorithm. 

\begin{theorem}
[Stone's theorem on one-parameter unitary groups~\citep{stone1930linear, stone1932one}] A one-to-one correspondence between Hermitian operators $H$ on a Hilbert space $\mathcal{H} = {\mathbb{C}^{2^n}}$ and one-parameter families $(U_t)_{t\in\mathbb{R}}$ of strongly continuous homomorphic unitaries can be shown as follows:
$$U_t = e^{itH}.$$
\end{theorem}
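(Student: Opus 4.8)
The plan is to establish the claimed bijection by constructing both directions of the correspondence and verifying they are mutually inverse. The forward direction is the routine one: given a Hermitian $H$, I would set $U_t := e^{itH}$ and check the three required properties. Unitarity follows from $U_t^\dagger = e^{-itH^\dagger} = e^{-itH} = U_t^{-1}$ using $H^\dagger = H$; the homomorphism property $U_{s+t} = U_s U_t$ together with $U_0 = I$ holds because $isH$ and $itH$ commute; and strong continuity is automatic, since on the finite-dimensional space $\mathcal{H} = \mathbb{C}^{2^n}$ the map $t \mapsto e^{itH}$ is real-analytic. So this direction contributes no genuine difficulty.

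The substantive direction is the converse: recovering a Hermitian generator from an \emph{arbitrary} strongly continuous one-parameter unitary group $(U_t)_{t\in\mathbb{R}}$. My plan is to define the generator as $H := -i\,\frac{d}{dt}\big|_{t=0} U_t$, but this presupposes differentiability, which is precisely what must be earned from mere continuity. I would bootstrap continuity to differentiability by the standard averaging trick: for sufficiently small $\tau>0$ the operator $V := \int_0^\tau U_s\,ds$ is close to $\tau I$ and hence invertible, and the group law gives $U_t V = \int_t^{t+\tau} U_s\,ds$, whose right-hand side is differentiable in $t$ by the fundamental theorem of calculus. Therefore $U_t = \big(\int_t^{t+\tau} U_s\,ds\big)V^{-1}$ is differentiable. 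Writing $A := U'(0)$ and differentiating $U_{t+s}=U_tU_s$ in $s$ at $s=0$ yields the linear ODE $U_t' = U_t A$ with $U_0 = I$, whose unique solution is $U_t = e^{tA}$. Differentiating $U_t^\dagger U_t = I$ at $t=0$ gives $A^\dagger + A = 0$, so $A$ is anti-Hermitian; setting $H := -iA$ makes $H$ Hermitian and delivers $U_t = e^{itH}$.

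Finally I would confirm the correspondence is one-to-one. It is injective because $e^{itH_1}=e^{itH_2}$ for all $t$ forces $iH_1 = iH_2$ upon differentiating at $t=0$, and the converse construction shows it is surjective onto the strongly continuous homomorphic unitary groups, so the two maps are inverse to each other. I expect the main obstacle to be the differentiability step, establishing that a merely continuous group is automatically smooth, since everything afterward is elementary linear ODE theory together with a one-line computation that the generator is anti-Hermitian. The averaging/integration argument is the cleanest way to dispatch it, and finite-dimensionality is what makes the requisite integrals and the invertibility of $V$ harmless, avoiding the domain and unboundedness subtleties that burden the general infinite-dimensional Stone theorem.
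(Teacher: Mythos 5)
Your argument is mathematically sound, but note that the paper does not prove this statement at all: it is quoted as a classical result of Stone and used as a black box in Appendix~\ref{appendix:decompose} to justify writing each parameterized gate as $e^{i\theta_j H_j}$, so there is no in-paper proof to match your route against. Taken on its own terms, your proof is the standard and correct one for the finite-dimensional setting the paper works in. The forward direction is routine, as you say, and you correctly identify the crux of the converse: upgrading mere (strong, hence in finite dimensions norm) continuity to differentiability via the averaging operator $V=\int_0^\tau U_s\,ds$, whose invertibility for small $\tau$ follows from $\|V/\tau - I\|\le \sup_{s\in[0,\tau]}\|U_s-I\|<1$. The subsequent steps --- the ODE $U_t'=U_tA$ with unique solution $e^{tA}$, anti-Hermiticity of $A$ from differentiating $U_t^\dagger U_t=I$, and injectivity by differentiating at $t=0$ --- are all correct. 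One small remark: since the paper's Hilbert space is $\mathbb{C}^{2^n}$, strong and norm continuity coincide and the integrals are ordinary Riemann integrals of continuous matrix-valued functions, so your closing observation that finite-dimensionality is what defuses the domain and unboundedness issues of the general Stone theorem is exactly the right thing to flag; in the infinite-dimensional statement the cited references actually prove, the generator is an unbounded self-adjoint operator and the averaging argument must be replaced by a genuinely harder spectral-theoretic one. For the purposes of this paper your finite-dimensional proof fully suffices.
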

Formally, the variational Ansatz $U(\boldsymbol\theta)$ can be expanded using Stone's theorem as,
\begin{equation}
U(\boldsymbol\theta) = W_1e^{i\theta_1H_1}V_1W_2e^{i\theta_2H_2}V_2 \cdots W_se^{i\theta_sH_s}V_s.
\end{equation}
where $W_i$ and $V_i$ are fixed unitaries and $H_i$ are Hermitian operators.

\begin{theorem}
[Baker–Campbell–Hausdorff identity~\citep{campbell1897law}]
Given two matrices $X$ and $Y$, $$e^XYe^{-X} = \sum_{n=0}^\infty\frac{[(X)^n,Y]}{n!},$$
where $[(X)^{n},Y]\equiv[\underbrace{X,\cdot\cdot\cdot[X,[X}_{n{\mathrm{~times}}},Y]]$ and $[(X)^{0},Y]\equiv Y.$
\end{theorem}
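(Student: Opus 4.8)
The plan is to prove the identity by introducing the one-parameter matrix family $f(t) := e^{tX} Y e^{-tX}$, showing it satisfies a first-order linear differential equation, and recovering the claimed series as its Taylor expansion evaluated at $t = 1$. First I would differentiate $f$ by the product rule, obtaining $f'(t) = X e^{tX} Y e^{-tX} - e^{tX} Y e^{-tX} X = [X, f(t)]$. This is the key structural observation: differentiation acts on $f$ simply by taking one further commutator with $X$, that is, by applying the adjoint map $\mathrm{ad}_X(\cdot) = [X, \cdot]$.

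Next I would iterate this relation by induction on $n$ to establish $f^{(n)}(t) = [(X)^n, f(t)]$, where the nested commutator on the right is exactly the object defined in the statement. Evaluating at $t = 0$ and using $f(0) = Y$ then gives $f^{(n)}(0) = [(X)^n, Y]$. Writing the Taylor series of $f$ about $t = 0$ yields
\[
f(t) = \sum_{n=0}^\infty \frac{[(X)^n, Y]}{n!}\, t^n,
\]
and setting $t = 1$ produces $e^X Y e^{-X} = \sum_{n=0}^\infty \frac{[(X)^n, Y]}{n!}$, which is the desired identity.

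The main obstacle, and the only genuinely analytic step, is justifying that $f$ equals its Taylor series and that the term-by-term manipulations are legitimate. Since the ambient Hilbert space is $\mathbb{C}^{2^n}$, the operators $X$ and $Y$ are finite matrices, and I would close this gap with a crude operator-norm bound: each commutator at most doubles the norm contribution of $X$, so that
\[
\|[(X)^n, Y]\| \le (2\|X\|)^n \|Y\|.
\]
The series is therefore majorized by the convergent scalar series $\|Y\|\, e^{2\|X\|}$, hence converges absolutely and uniformly on compact $t$-intervals; standard results on matrix-valued power series then guarantee that $f$ is real-analytic and coincides with the series above, completing the argument.
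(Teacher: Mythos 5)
Your proof is correct. The paper itself does not prove this statement---it is quoted as a classical result (the Hadamard/adjoint form of the Baker--Campbell--Hausdorff identity) with a citation---so there is no in-paper argument to compare against; your route via $f(t)=e^{tX}Ye^{-tX}$, the observation $f'(t)=[X,f(t)]$, and induction to $f^{(n)}(0)=[(X)^n,Y]$ is the standard and entirely adequate derivation in the finite-dimensional setting the paper works in ($\mathcal{H}=\mathbb{C}^{2^n}$). One small polish point: convergence of the candidate series does not by itself imply that $f$ \emph{equals} that series, so rather than appealing to ``standard results on matrix-valued power series,'' it is cleaner to note that your own bound $\|f^{(N+1)}(s)\|\le(2\|X\|)^{N+1}\|Y\|\,e^{2\|X\||s|}$ makes the Lagrange remainder $\tfrac{t^{N+1}}{(N+1)!}\sup_{s}\|f^{(N+1)}(s)\|$ vanish as $N\to\infty$ (or, alternatively, to observe that both $f$ and the series solve the ODE $g'=[X,g]$, $g(0)=Y$, and invoke uniqueness). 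This is a presentational refinement, not a gap.
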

We apply this identity recursively to rewrite the Hermitian operator $U^\dagger(\theta) OU(\theta)$ in polynomial terms of $\theta_1, \theta_2, \cdots \theta_s$. Expanding the first unitary $U_1$, we get

\begin{equation}
\label{unitary_expansion}
U_1^\dagger(\theta_1) O U_1(\theta_1) = V_1^\dagger\left(\sum_{k=0}^\infty\frac{[(i\theta_1H_1)^k, W_1^\dagger O W_1]}{k!}\right) V_1 = \sum_{k=0}^\infty\frac{\theta_1^k}{k!}V_1^\dagger[(iH_1)^k, W_1^\dagger O W_1]V_1.
\end{equation}

We note that the term $iH_j$ is anti-Hermitian, hence for all $k \ge 0$, $V_1^\dagger[(iH_1)^k,  W_1^\dagger O W_1]V_1$ is Hermitian. Thus, we can write $U_1^\dagger(\theta_1) O U_1(\theta_1)$ as a weighted polynomial sum of Hermitians against $\theta_1$. Plugging the result recursively against the other terms, one can obtain  
\begin{multline}
\label{unitary_full_expansion}
U^\dagger(\boldsymbol\theta) O U(\boldsymbol\theta) = \sum_{\forall i \in [s], a_i \ge 0}\left(\prod_{i=1}^s\frac{\theta_i^{a_i}}{a_i!}\right)
\underbrace{V_s^\dagger[(iH_s)^{a_s}, W_s^\dagger V_{s-1}^\dagger[(iH_{s-1})^{a_{s-1}}, \cdots V_1^\dagger[(iH_1)^{a_1}, W_1^\dagger O W_1]V_1\cdots ]V_{s-1}W_s]V_s}_{\mathrm{Hermitian}}.
\end{multline}
This expansion allows one to express 
\begin{equation}
 \mathfrak{O}(\boldsymbol{\theta}) = U(\boldsymbol{\theta})OU^\dagger(\boldsymbol{\theta}) \approx \sum_i \mathcal{F}_i(\boldsymbol{\theta}) \mathfrak{O}_i.
\end{equation} 
This discussion shows that a deconstruction of the variational algorithm does indeed exist formally and corresponds to a linear combination of Hermitians. However, we note that there are currently infinite terms in this expression. We now show that we can express this linear combination in limited terms.
Any Hermitian operator can be expressed in a basis of Pauli matrices such that 
\begin{equation}
H\in {\mathcal M_{2\times 2}(\mathbb{C})}^{\otimes n} \rightarrow H \in \operatorname{span} \left(\{\mathcal{I}, \mathcal{X}, \mathcal{Y}, \mathcal{Z}\}^{\otimes n}\right),
\end{equation}
hence, we can state that 
\begin{equation}
U^\dagger(\boldsymbol{\theta}) O U(\boldsymbol{\theta})\in \operatorname{span}\left(\{\mathcal{I}, \mathcal{X}, \mathcal{Y}, \mathcal{Z}\}^{\otimes n}\right),
\end{equation}
where each coefficient is $\poly(\theta_1, \theta_2, \cdots \theta_s)$.

Therefore, supposing that the variational algorithm obtains an optimal $\boldsymbol{\theta^*}$, the post-variational algorithm requires at most $4^n$ terms to express the same optimal answer. Comparing the number of parameters tunable, we see that the variational algorithm requires $\mathcal O(\poly(s))$ parameters, while the post-variational algorithm requires $\mathcal O(4^n)$ parameters. We note that the quantum advantage of variational algorithms in terms of parameters is the ability to generate different observables on higher orders of $\boldsymbol{\theta}$, a feat that activation functions of classical computers cannot achieve, as classical activation functions can only non-linearize the $\boldsymbol{\theta}$ parameter itself without generating new observables. 
However, considering that the variational algorithm returns an estimation of the optimal answer, rather than the exact optimal answer, we hope to restrict the number of Hermitian terms used in the post-variational algorithm to $\mathcal O(\poly(s))$ terms to achieve a similar estimation. 

\section{Derivation of random measurement errors}
\label{appMeasure}

\begin{proof}[Proof of Proposition \ref{proposition:measure}]
We represent the output of a single quantum neuron over a single data output over multiple runs as a series of i.i.d. random variables $Z_1, Z_2, \cdots Z_t$, each with a range of $[-1, 1]$. Then by Hoeffing bound~\citep{hoeffding1963probability}, we find that the sample mean $\bar{Z}$ and the expectation of the above random variables have 
$\Pr(|\bar{Z} - E[Z]| \ge \epsilon) \le 2\exp(-\frac{t\epsilon_H^2}{2}).$
Further, as each run is associated with a single data point on a single quantum neuron, we need to apply a union bound~\citep{boole2009mathematical} to ensure for all $m$ quantum neurons, and $d$ data points, we only have a small probability of $|\bar{Z} - E[Z]| \ge \epsilon_H$. We denote the event $E_{i, j}$ as the observation $i$th data point and $j$th quantum neuron having $|\bar{Z} - E[Z]| \ge \epsilon_H$ for the corresponding observations. We then see that
$
\Pr(\cup_{i=1}^m \cup_{j=1}^d E_{ij}) \le\sum_{i=1}^m\sum_{j=1}^d\Pr(E_{ij}) = md\cdot \Pr(|\bar{Z} - E[Z]| \ge \epsilon_H) \le 2md\exp(-\frac{t\epsilon_H^2}{2})\le \delta, 
$
such that $t \in \mathcal O(\frac{1}{\epsilon_H^{2}}\log \frac{md}{\delta})$. We note the number of quantum measurements are duplicated across $m$ quantum neurons and $d$ data points, hence the total number of observations required is 
$\mathcal O(\frac{md}{\epsilon_H^2}\log \frac{md}{\delta}).$
\end{proof}

\begin{proof}[Proof of Proposition \ref{proposition:shadow}]
For simplicity, we consider the original classical shadows protocol~\citep{huang2020predicting} utilizing the median-of-means estimator~\citep{nemirovsky1983problem} for discussion. With the median-of-means estimator, one partitions the number of measurements into $s$ groups of $t$ measurements, taking the median of the mean over the $t$ measurements from the $s$ groups.
From the analysis of the median-of-means estimator in \citep{huang2020predicting}, we note that with probability $1-\delta$, per Hoeffding bound~\citep{hoeffding1963probability} and Chebyshev's inequality, one should set $t \in \mathcal O\left(\frac{\max_{k \in [q]}\|O_k\|_S^2}{\epsilon_{H}^2}\right)$ and per union bound, one should set $s \in \mathcal O(\log\frac{md}{\delta})$. As the classical shadows method predicts the outcome of $q$ quantum neurons that share the same Ansatz and data point, the process has to be duplicated over $d$ data points and $p$ different Ans\"atze. Hence the total number of measurements needed in total is then $\mathcal O\left(\frac{pd\max_{k \in [q]}\|O_k\|_S^2}{\epsilon_H^2}\log \frac{md}{\delta}\right).$
\end{proof}

\section{Derivation of error propagation}
\label{appError}
In this section, we detail the derivation of the error propagation of the element-wise error of the quantum neuron to the entire network. Before the proofs, we first set up some further notations needed.
Given a field $\mathbb{F}$ of either real or complex numbers, for matrix $A\in \mathcal M_{M\times N}(\mathbb{F})$, we denote the Frobenius norm, or Hilbert-Schmidt norm, to be $\|A\|_F = \sqrt{\sum_i\sum_j |A_{ij}|^2} = \sqrt{\sum_{i=1}^{\min(M, N)} \sigma_i^2(A)}$. We also denote the nuclear norm to be $\|A\|_* = \sum_{i=1}^{\min(M, N)} \sigma_i(A)$. Recall that $\sigma_i(A)$ is the $i$-th singular value of $A$.

Before we prove Theorem \ref{theorem:linear}, we first show some preliminaries to aid the proof of our results. For matrix $A \in \mathcal{M}_{M\times N}(\mathbb{R})$, we note the following inequalities for matrix norms
\begin{align}
    \|A\|_{\max} \le \|A\| &\le \|A\|_F \le \sqrt{MN}\|A\|_{\max},\\
    \|A\|_F \le \|A\|_* &\le \sqrt{r}\|A\|_F \le r \|A\|,
\end{align}
where $r = \rank(A) \le \min(M,N)$. 

\begin{theorem} [Perturbation inequality for concave functions of singular values~\citep{oymak2011simplified, yue2016perturbation}]
\label{theoremPerturbationConcave}
    Given matrices $A, B \in \mathcal{M}_{M\times N}(\mathbb{R})$. Suppose that $f:\mathbb{R}_+ \to \mathbb{R}_+$ is a concave function where $f(0) = 0$, then we have
    $$\sum_{i=1}^{\min(M, N)}|f(\sigma_i(A))-f(\sigma_i(B))| \le \sum_{i=1}^{\min(M, N)} f(\sigma_i(A-B)).$$
\end{theorem}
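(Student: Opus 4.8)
The plan is to isolate the two structural facts about $f$ that drive the inequality and then to reduce the matrix statement to a one-parameter family of threshold inequalities that Weyl's inequalities can handle. First I would record the elementary consequences of the hypotheses: a nonnegative concave $f$ on $\mathbb{R}_+$ with $f(0)=0$ is nondecreasing (otherwise concavity would eventually force $f<0$) and subadditive, $f(a+b)\le f(a)+f(b)$ (since $f(a)\ge\tfrac{a}{a+b}f(a+b)$ and likewise for $b$). In particular $f'$ is nonnegative and nonincreasing, and $|f(a)-f(b)|\le f(|a-b|)$ for $a,b\ge 0$. Monotonicity also gives $f(\sigma_1(A))\ge f(\sigma_2(A))\ge\cdots$, so both $\{f(\sigma_i(A))\}_i$ and $\{f(\sigma_i(B))\}_i$ are already sorted, and the left-hand side is the $\ell_1$ distance between two sorted nonnegative sequences.

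The reduction I would use rests on a layer-cake identity. Writing $\mathcal R_\lambda(M):=\#\{i:\sigma_i(M)>\lambda\}$, one has $\sum_i f(\sigma_i(M))=\int_0^\infty \mathcal R_\lambda(M)\,f'(\lambda)\,d\lambda$, and for sorted sequences the one-dimensional monotone-coupling (optimal transport) identity gives $\sum_i|f(\sigma_i(A))-f(\sigma_i(B))|=\int_0^\infty |\mathcal R_\lambda(A)-\mathcal R_\lambda(B)|\,f'(\lambda)\,d\lambda$. Since the admissible weights $f'$ are exactly the nonnegative nonincreasing functions, an extreme-ray argument (every such weight is a nonnegative mixture of indicators $\mathbf 1_{[0,\Lambda)}$) shows the theorem is equivalent to the single-threshold statement
\[
\int_0^\Lambda |\mathcal R_\lambda(A)-\mathcal R_\lambda(B)|\,d\lambda \le \int_0^\Lambda \mathcal R_\lambda(A-B)\,d\lambda,\qquad \forall\,\Lambda>0,
\]
equivalently $\sum_i|\min(\sigma_i(A),\Lambda)-\min(\sigma_i(B),\Lambda)|\le \sum_i\min(\sigma_i(A-B),\Lambda)$; this is just the theorem for the building blocks $f=\min(\cdot,\Lambda)$, and it suffices because every $f$ of the stated type is a nonnegative mixture $f(x)=\int_{(0,\infty]}\min(x,\Lambda)\,d\mu(\Lambda)$. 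The endpoint $\Lambda=\infty$ is precisely Mirsky's nuclear-norm inequality $\sum_i|\sigma_i(A)-\sigma_i(B)|\le\|A-B\|_*$, which can be quoted classically or recovered from the norm inequalities already collected in this appendix.

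To prove the threshold inequality I would invoke Weyl's singular-value inequalities in counting form: from $\sigma_{b+e+1}(A)\le\sigma_{b+1}(B)+\sigma_{e+1}(A-B)$, setting $b=\mathcal R_\lambda(B)$ and $e=\mathcal R_\mu(A-B)$ yields $\mathcal R_{\lambda+\mu}(A)\le\mathcal R_\lambda(B)+\mathcal R_\mu(A-B)$ for all $\lambda,\mu\ge 0$, together with the symmetric bound obtained by swapping $A$ and $B$. Integrating these \emph{coordinated} estimates over $\{0\le\lambda,\ 0\le\mu,\ \lambda+\mu\le\Lambda\}$ controls the excess of $\mathcal R(A)$ over $\mathcal R(B)$, and conversely, at each level in terms of the area under $\mathcal R_\mu(A-B)$, which is exactly what the threshold inequality asserts.

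The step I expect to be the genuine obstacle is this last coordinated bookkeeping, namely passing from a bound on the absolute value of a difference of sums to a bound on the sum of absolute values. Subadditivity alone only gives the weaker signed estimate $|\sum_i f(\sigma_i(A))-\sum_i f(\sigma_i(B))|\le\sum_i f(\sigma_i(A-B))$, obtained by applying subadditivity of $M\mapsto\sum_i f(\sigma_i(M))$ to $A=B+(A-B)$ (a Rotfel'd-type inequality). The tempting strengthening — bounding $|f(\sigma_i(A))-f(\sigma_i(B))|$ by $f(|\sigma_i(A)-\sigma_i(B)|)$ and then applying Mirsky — \emph{fails}, because $\{|\sigma_i(A)-\sigma_i(B)|\}$ is only weakly majorized by $\{\sigma_i(A-B)\}$ and concave functions reverse the majorization inequality, placing the resulting quantity on the wrong side. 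Keeping the cut level $\lambda$ common to $A$ and $B$ throughout the threshold argument is precisely what repairs this defect, and making that integrated Weyl estimate rigorous for every $\Lambda$ is the crux of the proof.
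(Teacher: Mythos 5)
You should first note that the paper does not prove Theorem~\ref{theoremPerturbationConcave} at all: it is imported verbatim from the cited references (Oymak et al.; Yue and So) and used as a black box in the proof of Lemma~\ref{lemmaPerturbationRank}, so there is no in-paper proof to compare against. Judged on its own merits, the first half of your proposal is correct and matches the standard opening of the published proofs: the elementary facts (nonnegative concave with $f(0)=0$ implies nondecreasing and subadditive), the layer-cake identity $\sum_i f(\sigma_i(M))=\int_0^\infty \mathcal R_\lambda(M)f'(\lambda)\,d\lambda$, the observation that for sorted sequences the sign of $\mathbf 1[\sigma_i(A)>\lambda]-\mathbf 1[\sigma_i(B)>\lambda]$ is constant in $i$ (so the $\ell_1$ distance really does equal $\int|\mathcal R_\lambda(A)-\mathcal R_\lambda(B)|f'(\lambda)\,d\lambda$), and the Choquet-type reduction to the extreme rays $f=\min(\cdot,\Lambda)$ with Mirsky's inequality as the $\Lambda=\infty$ endpoint. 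Your diagnosis of why the naive route fails (weak majorization is reversed by concave functions) is also correct and is exactly the subtlety that makes this theorem nontrivial.

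The genuine gap is the threshold inequality itself, which is the mathematical content of the theorem, and your proposed closing move does not work as described. Integrating the counting form $\mathcal R_{\lambda+\mu}(A)\le\mathcal R_\lambda(B)+\mathcal R_\mu(A-B)$ over the triangle $\{\lambda,\mu\ge 0,\ \lambda+\mu\le\Lambda\}$ produces the mismatched weights $\int_0^\Lambda t\,\mathcal R_t(A)\,dt$ and $\int_0^\Lambda(\Lambda-t)\,\mathcal R_t(B)\,dt$ on the left against $\int_0^\Lambda(\Lambda-\mu)\,\mathcal R_\mu(A-B)\,d\mu$ on the right, which is not the asserted inequality $\int_0^\Lambda|\mathcal R_\lambda(A)-\mathcal R_\lambda(B)|\,d\lambda\le\int_0^\Lambda\mathcal R_\mu(A-B)\,d\mu$; moreover the counting inequality only bounds a one-sided, level-shifted difference, so the sign changes of $\mathcal R_\lambda(A)-\mathcal R_\lambda(B)$ on $[0,\Lambda]$ are never handled. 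You concede this step is ``the crux,'' and indeed it is where the cited proofs do their real work: they split the indices at $r=\mathcal R_\Lambda(A)$, $s=\mathcal R_\Lambda(B)$, $t=\mathcal R_\Lambda(A-B)$ and combine Weyl's inequality with the Lidskii-type weak majorization $|\sigma(A)-\sigma(B)|\prec_w\sigma(A-B)$ applied to selected index sets, a case analysis rather than an integrated Weyl estimate. Without that argument (or an explicit citation standing in for it, which is what the paper itself does), your proof is incomplete.
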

We use this theorem to prove the follow lemma.
\begin{lemma} [Perturbation inequality for rank differences]
\label{lemmaPerturbationRank}
    Given matrix $A, B \in \mathcal{M}_{M\times N}(\mathbb{R})$, if $\|A-B\|_{\max} < \min(\sigma_{\min}(A), \sigma_{\min}(B)) / \sqrt{\min(M,N)MN}$, where $\sigma_{\min}(A)$ is the smallest non-zero singular value of $A$, then $\rank(A) = \rank(B)$.
\end{lemma}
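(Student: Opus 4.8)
The plan is to build a single concave ``test function'' whose weighted count of singular values reads off the rank directly, and then to show that the hypothesis forces this count to coincide for $A$ and $B$. Write $r_A = \rank(A)$ and $r_B = \rank(B)$, and set $\sigma^\star := \min(\sigma_{\min}(A), \sigma_{\min}(B)) > 0$. The key object is the clamped function $f(x) = \min(x, \sigma^\star)$ on $\mathbb{R}_+$, which is concave (a minimum of two affine functions), satisfies $f(0) = 0$, and maps $\mathbb{R}_+ \to \mathbb{R}_+$, so that Theorem~\ref{theoremPerturbationConcave} applies to the pair $A, B$.

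First I would evaluate both sides of the inequality in Theorem~\ref{theoremPerturbationConcave} for this $f$. Since every non-zero singular value of $A$ is at least $\sigma_{\min}(A) \ge \sigma^\star$, we get $f(\sigma_i(A)) = \sigma^\star$ exactly when $i \le r_A$ and $f(\sigma_i(A)) = 0$ otherwise; likewise for $B$ with $r_B$. Hence the left-hand side collapses to an exact rank difference, $\sum_i |f(\sigma_i(A)) - f(\sigma_i(B))| = \sigma^\star\, |r_A - r_B|$, because the two indicators disagree on precisely $|r_A - r_B|$ indices. For the right-hand side, the bound $f(x) \le x$ gives $\sum_i f(\sigma_i(A-B)) \le \sum_i \sigma_i(A-B) = \|A-B\|_*$.

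Next I would bound the nuclear norm of the difference by the max norm through the chain of inequalities recorded just before the lemma: $\|A-B\|_* \le \sqrt{\min(M,N)}\,\|A-B\|_F \le \sqrt{\min(M,N)}\,\sqrt{MN}\,\|A-B\|_{\max} = \sqrt{\min(M,N)MN}\,\|A-B\|_{\max}$. Combining this with the two evaluations and Theorem~\ref{theoremPerturbationConcave} yields $\sigma^\star\,|r_A - r_B| \le \sqrt{\min(M,N)MN}\,\|A-B\|_{\max}$. The hypothesis $\|A-B\|_{\max} < \sigma^\star / \sqrt{\min(M,N)MN}$ then forces $|r_A - r_B| < 1$, and since $r_A, r_B$ are integers this gives $r_A = r_B$, i.e.\ $\rank(A) = \rank(B)$.

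The main obstacle, and the only place where genuine insight is required, is the choice of the clamped function $f(x) = \min(x, \sigma^\star)$: it is engineered to be concave with $f(0)=0$ (so the perturbation theorem applies) while being flat above the spectral gap $\sigma^\star$, so that it behaves like $\sigma^\star$ times the indicator of the non-zero singular values and thereby converts the left-hand side into an \emph{exact} rank difference. Everything else is routine bookkeeping: verifying the two exact evaluations and chasing the stated norm inequalities so that the threshold $\sqrt{\min(M,N)MN}$ matches precisely. One minor point I would flag explicitly is the implicit assumption $A, B \ne 0$, needed so that $\sigma_{\min}$ and hence $\sigma^\star$ are well-defined and strictly positive.
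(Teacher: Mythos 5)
Your proof is correct and follows essentially the same route as the paper's: your clamped function $f(x)=\min(x,\sigma^\star)$ is exactly $\sigma^\star$ times the paper's piecewise-linear $f_\mu$, and the subsequent application of Theorem~\ref{theoremPerturbationConcave} together with the nuclear-to-max norm chain is identical. The remark about needing $A,B\neq 0$ so that $\sigma^\star>0$ is a fair minor point that the paper leaves implicit.
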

\begin{proof}
    We define concave function $f_\mu: \mathbb{R}_+ \to [0, 1]$ such that 
    \begin{equation}
         f_\mu(x) := \begin{cases} \frac{x}{\mu}, & 0 \le x < \mu,\\ 1, & \mu \le x.\end{cases}
    \end{equation}
    Suppose that $\mu =\min(\sigma_{\min}(A), \sigma_{\min}(B))$, we find that  
    \begin{equation}
        \sum_{i=1}^{\min(M, N)} \lvert f_\mu(\sigma_i(A)) - f_{\mu}(\sigma_i(B))\rvert = \lvert\rank(A)-\rank(B)\rvert.
    \end{equation}
    Then by Theorem~\ref{theoremPerturbationConcave}, we see that 
    \begin{equation}
        \lvert\rank(A)-\rank(B)\rvert \le \sum_{i=1}^{\min(M, N)} f_\mu(\sigma_i(A-B)) \le \sum_{i=1}^{\min(M, N)} \frac{\sigma_i(A-B)}{\mu} = \frac{\|A-B\|_*}{\min(\sigma_{\min}(A), \sigma_{\min}(B))}
    \end{equation}
    Using the matrix norm inequalities, we see that
    \begin{multline}
        |\rank(A)-\rank(B)| \le \frac{\|A-B\|_*}{\min(\sigma_{\min}(A), \sigma_{\min}(B))} \le \frac{\sqrt{\min(M,N)}\|A-B\|_F}{\min(\sigma_{\min}(A), \sigma_{\min}(B))} \le \frac{\sqrt{\min(M,N)MN}\|A-B\|_{\max}}{\min(\sigma_{\min}(A), \sigma_{\min}(B))}
    \end{multline}
    Setting an upper bound of 1 for the above inequality, we find that if
    $\frac{\sqrt{\min(M,N)MN}\|A-B\|_{\max}}{\min(\sigma_{\min}(A), \sigma_{\min}(B))} < 1$, or 
    \begin{equation}
        \|A-B\|_{\max} < \frac{\min(\sigma_{\min}(A), \sigma_{\min}(B))}{\sqrt{\min(M,N)MN}},
    \end{equation}
    then $|\rank(A)-\rank(B)| < 1$. Given that ranks have positive integers values, we find that $\rank(A) = \rank(B)$.
\end{proof}
We also use the following theorem:
\begin{theorem}[Perturbation theory for pseudoinverses~\citep{wedin1973perturbation}]
\label{theorem:perturbation_norm}
The spectral norm of the difference of the pseudoinverses of two matrices $A, B\in \mathcal{M}_{m\times n}(\mathbb{R})$ can be upper bounded as follows if $\rank(A) = \rank(B)$:
$$\|B^+-A^+\|\le 2 \|A^+\|\|B^+\|\|B-A\|.$$
\end{theorem}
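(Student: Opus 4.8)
The plan is to prove the bound from the Wedin perturbation identity, which writes $B^+ - A^+$ as an exact sum of three terms that can each be made first-order in $E := B - A$. First I would verify, by elementary projector algebra, the decomposition
\begin{equation*}
B^+ - A^+ = -B^+ E A^+ - (I - B^+B)A^+ + B^+(I - AA^+),
\end{equation*}
obtained by adding and subtracting $B^+AA^+$ and using $B^+A = B^+B - B^+E$. Call the three summands $T_1, T_2, T_3$. Each must still be converted into an expression carrying an explicit factor of $E$: from $(I-AA^+)A = 0$ one gets $(I-AA^+)B = (I-AA^+)E$, and from $(I-B^+B)B^\intercal = 0$ one gets $(I-B^+B)A^\intercal = -(I-B^+B)E^\intercal$. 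Substituting these together with the symmetry of the projectors $AA^+$ and $B^+B$ rewrites $T_2 = (I-B^+B)E^\intercal(A^+)^\intercal A^+$ and $T_3 = B^+(B^+)^\intercal E^\intercal(I-AA^+)$, so that all three terms are linear in $E$.

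The naive triangle-inequality bound at this stage gives $\|T_1\| \le \|A^+\|\|B^+\|\|E\|$ but only $\|T_2\| \le \|A^+\|^2\|E\|$ and $\|T_3\| \le \|B^+\|^2\|E\|$, which does not close to the claimed balanced form. This is exactly where the hypothesis $\rank(A) = \rank(B)$ enters. The key step is to read $\|T_2\|$ and $\|T_3\|$ as norms of products of complementary orthogonal projectors: since $B^+ = B^+(BB^+)$, one has $\|T_3\| \le \|B^+\|\,\|BB^+(I-AA^+)\|$, and $\|BB^+(I-AA^+)\|$ is the sine of the largest principal angle between $\mathrm{range}(A)$ and $\mathrm{range}(B)$. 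For subspaces of equal dimension (guaranteed by equal rank) this sine is symmetric, so $\|BB^+(I-AA^+)\| = \|AA^+(I-BB^+)\|$, and after substituting $(I-BB^+)A = -(I-BB^+)E$ the latter equals $\|(A^+)^\intercal E^\intercal(I-BB^+)\| \le \|A^+\|\|E\|$. This upgrades the crude bound to $\|T_3\| \le \|A^+\|\|B^+\|\|E\|$, and the identical argument applied to the row spaces (with $R=A^+A$, $\tilde R=B^+B$) gives $\|T_2\| \le \|A^+\|\|B^+\|\|E\|$.

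With all three terms bounded by $\|A^+\|\|B^+\|\|E\|$, a plain triangle inequality would only give the constant $3$; to reach a constant below $2$ I would exploit the geometry of the summands. The terms $T_1$ and $T_3$ map into the row space of $B$ while $T_2$ maps into its orthogonal complement $\ker(B)$, so $T_2$ has range orthogonal to $T_1+T_3$; meanwhile $T_1, T_2$ vanish on $\mathrm{range}(A)^\perp$ while $T_3$ vanishes on $\mathrm{range}(A)$, so $T_3$ has domain orthogonal to $T_1+T_2$. Using the Pythagorean-type inequality $\|X+Y\|^2 \le \|X\|^2+\|Y\|^2$, valid for summands with orthogonal ranges or orthogonal domains, these two groupings combine to give $\|B^+-A^+\|^2 \le \|T_1\|^2+\|T_2\|^2+\|T_3\|^2 \le 3\left(\|A^+\|\|B^+\|\|E\|\right)^2$, hence $\|B^+-A^+\| \le \sqrt{3}\,\|A^+\|\|B^+\|\|E\| \le 2\|A^+\|\|B^+\|\|E\|$.

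I expect the main obstacle to be the middle step: correctly identifying the cross terms with products of complementary projectors and invoking the equal-dimension symmetry of principal angles, since this is the only place the rank hypothesis is used and it is precisely what converts the unbalanced $\|A^+\|^2$ and $\|B^+\|^2$ estimates into the balanced $\|A^+\|\|B^+\|$ estimate. A secondary care point is tracking which projector is symmetric ($AA^+$ on the range side versus $B^+B$ on the row-space side) so that the transpose manipulations introducing the factor $E$ are valid; once the orthogonality of the ranges and domains of $T_1, T_2, T_3$ is verified, the final Pythagorean combination is routine.
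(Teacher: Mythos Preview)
The paper does not prove this result; it quotes it from Wedin (1973) and invokes it as a black-box lemma inside the proof of Theorem~\ref{theorem:linear}. Your proposal reconstructs the standard Wedin argument correctly: the three-term identity for $B^+-A^+$, the insertion of a factor $E$ into $T_2$ and $T_3$ via the projector relations, the equal-rank symmetry of principal angles between the range (resp.\ row) spaces to convert the unbalanced $\|A^+\|^2$, $\|B^+\|^2$ bounds into the balanced $\|A^+\|\|B^+\|$, and finally the mutual orthogonality of ranges and of domain supports among $T_1,T_2,T_3$ to combine the estimates Pythagoreanly. The constant $\sqrt{3}$ you obtain already undercuts the stated~$2$ (Wedin's sharp spectral-norm constant is $(1+\sqrt{5})/2$), so the proof closes.
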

We continue with putting these technical results together to prove the Theorem \ref{theorem:linear} of the main text.
\begin{proof} [Proof of Theorem \ref{theorem:linear}]
Recall that 
\begin{equation}
\Delta \mathcal{L}_{\mathrm{RMSE}} = \mathcal{L}_{\mathrm{RMSE}}(\boldsymbol{\hat{\alpha}^*}, Q) - \mathcal{L}_{\mathrm{RMSE}}(\boldsymbol{\alpha^*}, Q).
\end{equation}
Expanding directly, we obtain 
\begin{align}
\Delta \mathcal{L}_{\mathrm{RMSE}} &\le \lvert\Delta \mathcal{L}_{\mathrm{RMSE}}\rvert \le \frac{1}{\sqrt{d}}\left\lvert\|\boldsymbol{Y}-Q\boldsymbol{\hat{\alpha}^*}\|_2 - \|\boldsymbol{Y}-Q\boldsymbol{\alpha^*}\|_2\right\rvert
\\
\text{(reverse triangle)} &\le \frac{1}{\sqrt{d}}\|Q(\boldsymbol{\hat{\alpha}^*}-\boldsymbol{\alpha^*})\|_2 \le \frac{1}{\sqrt{d}}\|Q\| \|\boldsymbol{\hat{\alpha}^*}-\boldsymbol{\alpha^*}\|_2 \label{eq:rmse_bound}\\ 
&= \frac{1}{\sqrt{d}}\|Q\| \|\hat Q^+\boldsymbol{Y}-Q^+\boldsymbol{Y}\|_2 \le \frac{1}{\sqrt{d}}\|Q\|\|\boldsymbol{Y}\|_2\|\hat Q^+-Q^+\| .
\end{align}

With $0 < \epsilon_0 < 1$, we conduct enough measurements such that
$\|\hat{Q}-Q\|_{\max} < \epsilon_0 \frac{\min(\sigma_{\min}(Q), \sigma_{\min}(\hat{Q}))}{\sqrt{\min(m, d) md}}$, hence by Lemma~\ref{lemmaPerturbationRank}, $\rank(Q) = \rank(\hat{Q})$. Then by Theorem~\ref{theorem:perturbation_norm}, we obtain
\begin{align}
\|\hat{Q}^+ - Q^+\| &\le 2\|Q^+\|\|\hat{Q}^+\|\|\hat{Q}-Q\| \le 2\|Q^+\|\left(\|Q^+\| + \|\hat{Q}^+ - Q^+\|\right)\|\hat{Q}-Q\|\\
\Rightarrow \|\hat{Q}^+ - Q^+\| &\le \frac{2\|Q^+\|^2\|\hat{Q}-Q\|}{1-2\|Q^+\|\|\hat{Q}-Q\|}.
\end{align}
We note that if $m,d\ge 9$, then 
\begin{equation}
\|Q^+\|\|\hat{Q}-Q\| \le \frac{\sqrt{md}}{\sigma_{\min}(Q)}\|\hat{Q}-Q\|_{\max} < \frac{\sqrt{md}}{\sigma_{\min}(Q)} \frac{\min(\sigma_{\min}(Q), \sigma_{\min}(\hat{Q}))}{\sqrt{\min(m, d) md}} \le \frac{1}{\sqrt{\min(m, d)}} \le \frac{1}{3}.
\end{equation}
Hence we see that 
\begin{equation}
\|\hat{Q}^+ - Q^+\| \le 6 \|Q^+\|^2\|\hat{Q}-Q\|.
\end{equation}
Evaluating the loss
\begin{align}
\Delta\mathcal{L}_{\mathrm{RMSE}} &< \frac{6} {\sqrt{d}}\|\boldsymbol{Y}\|_2 \|Q\|\|Q^+\|^2 \|\hat{Q}-Q\| \leq 
\frac{6 \sqrt{md}} {\sqrt{d}}\|\boldsymbol{Y}\|_2 \|Q\|\|Q^+\|^2 \|\hat{Q}-Q\|_{\max} \\
&< 6 \|\boldsymbol{Y}\|_2 \|Q\|\|Q^+\|^2  \frac{\min(\sigma_{\min}(Q), \sigma_{\min}(\hat{Q}))}{\sqrt{\min(m, d) d}} \epsilon_0.
\end{align}
Now setting (it has to satisfy the upper bound $1$)
\begin{equation}
\epsilon_0 \leq \min \left(1,  \frac{\epsilon \sqrt{\min(m, d) d}}{6 \|\boldsymbol{Y}\|_2 \|Q\|\|Q^+\|^2 \min(\sigma_{\min}(Q), \sigma_{\min}(\hat{Q}))} \right),
\end{equation}
 then $\Delta \mathcal{L}_{\mathrm{RMSE}}\le \epsilon$.
We hence obtain for element-wise bound
\begin{equation}
\|\hat{Q}-Q\|_{\max} < \epsilon_0 \frac{\min(\sigma_{\min}(Q), \sigma_{\min}(\hat{Q}))}{\sqrt{\min(m, d) md}} \leq \min \left(\frac{\min(\sigma_{\min}(Q), \sigma_{\min}(\hat{Q}))}{\sqrt{\min(m, d) md}},  \frac{\epsilon }{6 \sqrt{m}\|\boldsymbol{Y}\|_2 \|Q\|\|Q^+\|^2 } \right). 
\end{equation}
\end{proof}

Finally, we show the case of $\ell_2$-constrained linear regression and logistic regression.
\begin{proof} [Proof of Theorem \ref{theorem:constrainedLinear}]
Note that due to the definition of $\boldsymbol{\hat\alpha^*}$, 
\begin{equation}
\|\boldsymbol{Y}-\hat Q\boldsymbol{\hat\alpha^*}\|_2 \le \|\boldsymbol{Y}-\hat Q\boldsymbol{\alpha^*}\|_2
\end{equation}
Expanding $\Delta \mathcal{L}_{\mathrm{RMSE}}$ and using the above property, we obtain 
\begin{align}
\Delta \mathcal{L}_{\mathrm{RMSE}} &=
\mathcal{L}_{\mathrm{RMSE}}(\boldsymbol{\hat{\alpha}^*}, Q) - \mathcal{L}_{\mathrm{RMSE}}(\boldsymbol{\alpha^*}, Q)
\\
&= \mathcal{L}_{\mathrm{RMSE}}(\boldsymbol{\hat{\alpha}^*}, Q) - \mathcal{L}_{\mathrm{RMSE}}(\boldsymbol{\hat\alpha^*}, \hat Q) + \mathcal{L}_{\mathrm{RMSE}}(\boldsymbol{\hat\alpha^*}, \hat Q) -\mathcal{L}_{\mathrm{RMSE}}(\boldsymbol{\alpha^*}, Q)\\
&\le \mathcal{L}_{\mathrm{RMSE}}(\boldsymbol{\hat{\alpha}^*}, Q) - \mathcal{L}_{\mathrm{RMSE}}(\boldsymbol{\hat\alpha^*}, \hat Q) + \mathcal{L}_{\mathrm{RMSE}}(\boldsymbol{\alpha^*}, \hat Q) -\mathcal{L}_{\mathrm{RMSE}}(\boldsymbol{\alpha^*}, Q)\\
&\le |\mathcal{L}_{\mathrm{RMSE}}(\boldsymbol{\hat{\alpha}^*}, Q) - \mathcal{L}_{\mathrm{RMSE}}(\boldsymbol{\hat\alpha^*}, \hat Q)| + |\mathcal{L}_{\mathrm{RMSE}}(\boldsymbol{\alpha^*}, \hat Q) -\mathcal{L}_{\mathrm{RMSE}}(\boldsymbol{\alpha^*}, Q)|\\
&= \frac{1}{\sqrt{d}}\left(\left|\|\boldsymbol{Y}-Q\boldsymbol{\hat{\alpha}^*}\|_2 - \|\boldsymbol{Y}-\hat Q\boldsymbol{\hat\alpha^*}\|_2\right|+\left|\|\boldsymbol{Y}-\hat Q\boldsymbol{\alpha^*}\|_2 - \|\boldsymbol{Y}-Q\boldsymbol{\alpha^*}\|_2\right|\right)\\
\text{(reverse triangle)}&\le \frac{1}{\sqrt{d}}\left(\|Q\boldsymbol{\hat{\alpha}^*}-\hat Q\boldsymbol{\hat\alpha^*}\|_2+\|\hat Q\boldsymbol{\alpha}^*-Q\boldsymbol{\alpha^*}\|_2\right)\\
&\le \frac{2}{\sqrt{d}}\|Q-\hat Q\| \le 2\sqrt{m} \|Q-\hat Q\|_{\max}
\end{align}
Setting $\Delta \mathcal{L}_{\mathrm{RMSE}}\le \epsilon$, we obtain for element-wise bound
\begin{equation}
\|\hat{Q}-Q\|_{\max} < \frac{\epsilon}{2\sqrt{m}} 
\end{equation}
\end{proof}

To extend the above results to logistic regression, simply change the estimator from $Q\boldsymbol{\alpha}$ to $\sigma(Q\boldsymbol{\alpha})$ and consider binary cross entropy loss. Similarly defining
\begin{align}
\Delta \mathcal{L}_{\mathrm{BCE}} &= \mathcal{L}_{\mathrm{BCE}}(\boldsymbol{\hat{\alpha}^*}, Q) -\mathcal{L}_{\mathrm{BCE}}(\boldsymbol{\alpha^*}, Q)\\
&\le |\mathcal{L}_{\mathrm{BCE}}(\boldsymbol{\hat{\alpha}^*}, Q) - \mathcal{L}_{\mathrm{BCE}}(\boldsymbol{\hat\alpha^*}, \hat Q)| + |\mathcal{L}_{\mathrm{BCE}}(\boldsymbol{\alpha^*}, \hat Q) -\mathcal{L}_{\mathrm{BCE}}(\boldsymbol{\alpha^*}, Q)|\\
&\begin{multlined}[b]
    \le\frac{1}{d}\left|\boldsymbol{Y}^\intercal \left(\log(\sigma(Q\boldsymbol{\hat\alpha^*})) - \log(\sigma(\hat Q\boldsymbol{\hat\alpha^*}))\right)+ (\boldsymbol{1}-\boldsymbol{Y})^\intercal \left(\log(1-\sigma(Q\boldsymbol{\hat\alpha^*})) - \log(1-\sigma(\hat Q\boldsymbol{\hat\alpha^*}))\right)\right|\\
    + \frac{1}{d}\left|\boldsymbol{Y}^\intercal \left(\log(\sigma(\hat Q\boldsymbol{\alpha^*})) - \log(\sigma(Q\boldsymbol{\alpha^*}))\right)+ (\boldsymbol{1}-\boldsymbol{Y})^\intercal \left(\log(1-\sigma(\hat Q\boldsymbol{\alpha^*})) - \log(1-\sigma( Q\boldsymbol{\alpha^*}))\right)\right|
\end{multlined}\\
&\begin{multlined}[b]
    \le\frac{1}{d}\boldsymbol{Y}^\intercal \left|\log(\sigma(Q\boldsymbol{\hat\alpha^*})) - \log(\sigma(\hat Q\boldsymbol{\hat\alpha^*}))\right|+ \frac{1}{d}(\boldsymbol{1}-\boldsymbol{Y})^\intercal \left|\log(1-\sigma(Q\boldsymbol{\hat\alpha^*})) - \log(1-\sigma(\hat Q\boldsymbol{\hat\alpha^*}))\right|\\
    + \frac{1}{d}\boldsymbol{Y}^\intercal \left|\log(\sigma(\hat Q\boldsymbol{\alpha^*})) - \log(\sigma(Q\boldsymbol{\alpha^*}))\right|+ \frac{1}{d}(\boldsymbol{1}-\boldsymbol{Y})^\intercal \left|\log(1-\sigma(\hat Q\boldsymbol{\alpha^*})) - \log(1-\sigma( Q\boldsymbol{\alpha^*}))\right|
\end{multlined}
\end{align}
where the last inequality is true due to $\boldsymbol{Y} \in \{0, 1\}^d$, and thus one can apply the triangular inequality.

We observe that the functions $\log(\sigma(\cdot))$ and $\log(1-\sigma(\cdot))$ are Lipschitz continuous, with the Lipschitz constant being
\begin{align}
    K_{\log(\sigma(\cdot))} = \sup_{x\in\mathbb{R}} \left|\frac{\partial\log(\sigma(x))}{\partial x}\right| =  \sup_{x\in\mathbb{R}} (1 - \sigma(x)) = 1,\\
     K_{\log(1-\sigma(\cdot))} = \sup_{x\in\mathbb{R}} \left|\frac{\partial\log(1-\sigma(x))}{\partial x}\right| =  \sup_{x\in\mathbb{R}} \sigma(x) = 1.
\end{align}
Hence, we see that 
\begin{align}
\Delta \mathcal{L}_{\mathrm{BCE}} \le& \frac{1}{d}\boldsymbol{Y}^\intercal \left|Q\boldsymbol{\hat\alpha^*} - \hat Q\boldsymbol{\hat\alpha^*}\right|+ \frac{1}{d}(\boldsymbol{1}-\boldsymbol{Y})^\intercal \left|Q\boldsymbol{\hat\alpha^*}- \hat Q\boldsymbol{\hat\alpha^*}\right| + \frac{1}{d}\boldsymbol{Y}^\intercal \left|\hat Q\boldsymbol{\alpha^*} - Q\boldsymbol{\alpha^*}\right|+ \frac{1}{d}(\boldsymbol{1}-\boldsymbol{Y})^\intercal \left|\hat Q\boldsymbol{\alpha^*} -  Q\boldsymbol{\alpha^*}\right|\\
=& \frac{1}{d}\boldsymbol{1}^\intercal \left|Q\boldsymbol{\hat\alpha^*} - \hat Q\boldsymbol{\hat\alpha^*}\right| + \frac{1}{d}\boldsymbol{1}^\intercal \left|\hat Q\boldsymbol{\alpha^*} - Q\boldsymbol{\alpha^*}\right| =\frac{1}{d} \left\|Q\boldsymbol{\hat\alpha^*} - \hat Q\boldsymbol{\hat\alpha^*}\right\|_1 + \frac{1}{d}\left\|\hat Q\boldsymbol{\alpha^*} - Q\boldsymbol{\alpha^*}\right\|_1\\
\le& \frac{1}{\sqrt{d}} \left(\|Q\boldsymbol{\hat\alpha^*} - \hat Q\boldsymbol{\hat\alpha^*}\|_2 + \|\hat Q\boldsymbol{\alpha^*} - Q\boldsymbol{\alpha^*}\|_2\right) \le 2\sqrt{m}\|\hat{Q}-Q\|_{\max}
\end{align}
which returns the same element-wise bound $\|\hat{Q}-Q\|_{\max} < \frac{\epsilon}{2\sqrt{m}}$.

\twocolumngrid
\bibliography{main}

\end{document}